\begin{document}

\newtheorem{definition}{Definition}
\newtheorem{lemma}{Lemma}
\newtheorem{theorem}{Theorem}
\newtheorem{example}{Example}
\newtheorem{proposition}{Proposition}
\newtheorem{remark}{Remark}
\newtheorem{assumption}{Assumption}
\newtheorem{corrolary}{Corrolary}
\newtheorem{property}{Property}
\newtheorem{ex}{EX}
\newtheorem{problem}{Problem}
\newcommand{\argmin}{\arg\!\min}
\newcommand{\argmax}{\arg\!\max}
\newcommand{\st}{\text{s.t.}}
\newcommand \dd[1]  { \,\textrm d{#1}  }

\newcounter{mytempeqncnt}

%
\title{Optimal Secure Control with Linear Temporal Logic Constraints}


\author{Luyao Niu,~\IEEEmembership{Student Member,~IEEE,} and Andrew Clark,~\IEEEmembership{Member,~IEEE}%
	\thanks{L. Niu and A. Clark are with the Department of Electrical and Computer Engineering, Worcester Polytechnic Institute, Worcester, MA 01609 USA.
		{\tt\{lniu,aclark\}@wpi.edu}}
	\thanks{This work was supported by NSF grant CNS-1656981.}
}


\maketitle
\begin{abstract}
Prior work on automatic control synthesis for cyber-physical systems under logical constraints has primarily focused on environmental disturbances or modeling uncertainties, however, the impact of deliberate and malicious attacks has been less studied. In this paper, we consider a discrete-time dynamical system with a linear temporal logic (LTL) constraint in the presence of an adversary, which is modeled as a stochastic game. We assume that the adversary observes the control policy before choosing an attack strategy. We investigate two problems. In the first problem, we synthesize a robust control policy for the stochastic game that maximizes the probability of satisfying the LTL constraint. A value iteration based algorithm is proposed to compute the optimal control policy. In the second problem, we focus on a subclass of LTL constraints, which consist of an arbitrary LTL formula and an invariant constraint. We then investigate the problem of computing a control policy that minimizes the expected number of invariant constraint violations while maximizing the probability of satisfying the arbitrary LTL constraint. We characterize the optimality condition for the desired control policy. A policy iteration based algorithm is proposed to compute the control policy. We illustrate the proposed approaches using two numerical case studies.
\end{abstract}

\begin{IEEEkeywords}
Linear Temporal Logic (LTL), stochastic game, adversary.
\end{IEEEkeywords}

\IEEEpeerreviewmaketitle

\section{Introduction}\label{sec:introduction}

Cyber-physical systems (CPS) are expected to perform increasingly complex tasks in applications including autonomous vehicles, teleoperated surgery, and advanced manufacturing. An emerging approach to designing such systems is to specify a desired behavior using formal methods, and then automatically synthesize a controller satisfying the given requirements~\cite{lahijanian2012temporal,sadigh2016safe,wongpiromsarn2009receding,karaman2009sampling,loizou2004automatic}.


Temporal logics such as linear temporal logic (LTL) and computation tree logic (CTL) are powerful tools to specify and verify system properties \cite{baier2008principles}. In particular, LTL, whose syntax and semantics have been well developed, is widely used to express system properties. Typical examples include liveness (e.g., ``always eventually A''), safety (e.g., ``always not A''), and priority (e.g., ``first A, then B''), as well as more complex tasks and behaviors~\cite{baier2008principles}. For systems operating in stochastic environments or imposed probabilistic requirements (e.g., ``reach A with probability 0.9''), probabilistic extensions have also been proposed such as safety and reachability games that capture worst-case system behaviors~\cite{kattenbelt2009verification}. 

In addition to modeling uncertainties and stochastic errors \cite{wolff2012robust,ding2014optimal,kattenbelt2009verification}, CPS will also be subject to malicious attacks, including denial-of-service and injection of false sensor measurements and control inputs~\cite{CIAreport,koscher2010experimental}. For instance, power outages have been reported due to the penetration of attackers in power systems \cite{CIAreport}. Attacks against cars and UAVs are also reported in \cite{koscher2010experimental,kerns2014unmanned}. Unlike stochastic errors/modeling uncertainties, intelligent adversaries are able to adapt their strategies to maximize impact against a given controller, and thus exhibit strategic behaviors. Moreover, controllers will have limited information regarding the objective and strategy of the adversary, making techniques such as randomized control strategies potentially effective in mitigating attacks. In this case, control strategies synthesized using existing approaches may be suboptimal in the presence of intelligent adversaries because they are designed for CPS under errors and uncertainties. However, automatic synthesis of control systems in adversarial scenarios has received limited research attention.

In this paper, we investigate two problems for a probabilistic autonomous system in the presence of an adversary who tampers with control inputs based on the current system state. We abstract the system as a stochastic game (SG), which is a generalization of Markov decision process (MDP). We assume a concurrent Stackelberg information structure, in which the adversary and controller take actions simultaneously.  Stackelberg games are popular models in security domain \cite{paruchuri2008playing, zhu2011stackelberg,basilico2012patrolling,tambe2011security}. Turn-based Stackelberg games, in which a unique player takes action each time step, have been used to construct model checkers \cite{chen2013prism} and compute control strategy \cite{ding2013stochastic}, however, to the best of our knowledge, control synthesis in the concurrent Stackelberg setting has been less investigated. 

We focus on two problems. In the first problem, we are given an arbitrary LTL specification and focus on generating a control strategy such that the probability of satisfying the specification is maximized. In the second problem, we focus on a subclass of LTL specification that combine an arbitrary LTL specification with an invariant constraint using logical and connectives, where an invariant constraint requires the system to always satisfy some property. The specification of interest is commonly required for CPS, where the arbitrary LTL specification can be used to model properties such as liveness and the invariant property can be used to model safety property. We consider the scenario where the specification cannot be satisfied. Hence, we relax the specification by allowing violations on the invariant constraint and we select a control policy that minimizes the average rate at which invariant property violations occur while maximizing the probability of satisfying the LTL specification. We make the following specific contributions:

\begin{itemize}
	\item We formulate an SG to model the interaction between the CPS and adversary. The SG describes the system dynamics and the effects of the joint input determined by the controller and adversary. We propose a heuristic algorithm to compute the SG given the system dynamics.
	
	\item We investigate how to generate a control policy that maximizes the worst-case probability of satisfying an arbitrary specification modeled using LTL. We prove that this problem is equivalent to a zero-sum stochastic Stackelberg game, in which the controller chooses a policy to maximize the probability of reaching a desired set of states and the adversary chooses a policy to minimize that probability. We give an algorithm to compute the set of states that the system desires to reach. We then propose an iterative algorithm for constructing an optimal stationary policy. We prove that our approach converges to a Stackelberg equilibrium and characterize the convergence rate of the algorithm. 
	
	\item We formulate the problem of computing a stationary control policy that minimizes the rate at which invariant constraint violations occur under the constraint that an LTL specification must be satisfied with maximum probability. We prove that this problem is equivalent to a zero-sum Stackelberg game in which the controller selects a control policy that minimizes the average violation cost and the adversary selects a policy that maximizes such cost. We solve the problem by building up the connections with a generalized average cost per stage problem. We propose a novel algorithm to generate an optimal stationary control policy. We prove the optimality and convergence of the proposed algorithm.
	
	\item We evaluate the proposed approach using two numerical case studies in real world applications. We consider a remotely controlled UAV under deception attack given different LTL specifications. We compare the performance of our proposed approaches with the performance obtained using existing approaches without considering the adversary's presence. The results show that our proposed approach outperforms existing methods.  
	
\end{itemize}

The remainder of this paper is organized as follows. Section \ref{sec:related} presents related work. Section \ref{sec:preliminaries} gives background on LTL, SGs, and preliminary results on average cost per stage and average cost per cycle problem. Section \ref{sec: system model} introduces the system model. Section \ref{sec:formulation1} presents the problem formulation on maximizing the probability of satisfying a given LTL specification and the corresponding solution algorithm. Section \ref{sec:formulation2} presents the problem formulation and solution algorithm of the problem of minimizing the average cost incurred due to violating invariant property while maximizing the probability of satisfying an LTL specification. Two numerical case studies are presented in Section \ref{sec:simulation} to demonstrate our proposed approaches. Section \ref{sec: conclusion} concludes the paper.

\section{Related Work}\label{sec:related}

Temporal logics such as LTL and CTL are widely used to specify and verify system properties \cite{baier2008principles}, especially complex system behaviors. Multiple frameworks (e.g., receding horizon based \cite{wongpiromsarn2009receding}, sampling based \cite{karaman2009sampling}, sensor-based \cite{kress2007s,bhatia2010sampling,plaku2010motion}, probabilistic map based \cite{fu2016optimal}, multi-agent based \cite{loizou2004automatic}, and probabilistic satisfaction based \cite{lahijanian2010motion}) have been proposed for motion planning in robotics under temporal logic constraints. Control synthesis for deterministic system and probabilistic system under LTL formulas are studied in \cite{kloetzer2008fully} and \cite{ding2014optimal}, respectively. Switching control policy synthesis among a set of shared autonomy systems is studied in \cite{fu2016synthesis}. When temporal logic constraints cannot be fulfilled \cite{raman2011analyzing}, least-violating control synthesis problem is studied in \cite{tuumova2013minimum}. These existing works do not consider the impact of malicious attacks. 

Existing approaches of control synthesis under LTL constraints require a compact abstraction of CPS such as MDP \cite{ding2014optimal,sadigh2016safe,wolff2012robust}, which models the non-determinism and probabilistic behaviors of the systems, and enables us applying off-the-shelf model checking algorithms for temporal logic \cite{baier2008principles}. Robust control of MDP under uncertainties has been extensively studied \cite{nilim2005robust,wolff2012robust}. Synthesis of control and sensing strategies under incomplete information for turn-based deterministic game is studied in \cite{fu2016synthesisjointcontrol}. However, MDPs only model the uncertainties that arise due to environmental disturbances and modeling errors, and are only suitable for scenarios with a single controller.

For CPS operating in adversarial scenarios, there are two decision makers (the controller and adversary) and their decisions are normally coupled. Thus, MDP cannot model the system, and the robust control strategy obtained on MDP may be suboptimal to the CPS operated in adversarial environment. To better formulate the strategic interactions between the controller and adversary, SG is used to generalize MDP \cite{fudenberg1991game}. Turn-based two-player SG, in which a unique player takes action at each time step, have been used to construct model checkers \cite{chen2013prism} and abstraction-refinement framework for model checking \cite{kattenbelt2009verification, quatmann2016parameter, kattenbelt2010game}. Unlike the literature using turn-based games \cite{kattenbelt2009verification, quatmann2016parameter, kattenbelt2010game, chen2013prism}, however, we consider a different information structure denoted as concurrent SG, in which both players take actions simultaneously at each system state \cite{de2000concurrent}.

Several existing works using SGs focus on characterizing and computing Nash equilibria \cite{ma2011game,li2015jamming}, whereas in the present paper we consider a Stackelberg setting in which the adversary chooses an attack strategy based on the control policy selected by the system. The relationship between Nash and Stackelberg equilibria is investigated in \cite{korzhyk2011stackelberg}. A hybrid SG with asymmetric information structure is considered in \cite{ding2013stochastic}. The problem setting in \cite{ding2013stochastic} is similar to turn-based stochastic game, while the concurrent setting considered in this paper can potentially grants advantage to the controller (see \cite{paruchuri2008playing} for a simple example). Moreover, the problem setting in this paper leads to a more general class of control strategies. Specifically, mixed strategies are considered in this work. In particular, concurrent SGs played with mixed strategies generalizes models including Markov chains, MDPs, probabilistic turn-based games and deterministic concurrent games \cite{de2000concurrent}. The problem of maximizing the probability of satisfying a given specification consisting of safety and liveness constraints in the presence of adversary is considered in the preliminary conference version of this work \cite{Niu2018Secure}. Whereas only a restricted class of LTL specifications is considered in \cite{Niu2018Secure}, in this paper, we derive results for arbitrary LTL specifications and we also investigate the problem of minimizing the rate of violating invariant constraints.

CPS security is also investigated using game and control theoretic approaches. Secure state estimation is investigated in \cite{shoukry2016event,fawzi2014secure}. CPS security and privacy using game theoretic approach is surveyed in \cite{manshaei2013game}. Game theory based resilient control is considered in \cite{zhu2015game}. CPS security under Stackelberg setting and Nash setting are studied in \cite{zhu2011stackelberg} and \cite{zhu2011robust}, respectively. Stochastic Stackelberg security games have been studied in \cite{basilico2012patrolling,tambe2011security}. 

\section{Preliminaries}\label{sec:preliminaries}

In this section, we present background on LTL, stochastic games, and preliminary results on the average cost per stage (ACPS) and average cost per cycle (ACPC) problems. Throughout this paper, we assume that inequalities between vectors and matrices are component wise comparison.

\subsection{Linear Temporal Logic (LTL)}

An LTL formula consists of \cite{baier2008principles}
\begin{itemize}
	\item a set of atomic propositions $\Pi$;
	\item Boolean operators: negation ($\neg$), conjunction ($\land$) and disjunction ($\lor$).;
	\item temporal operators: next ($X$) and until ($\mathcal{U}$).
\end{itemize}
An LTL formula is defined inductively as
\begin{equation*}
\phi=True\mid\pi\mid\neg\phi\mid\phi_1\land\phi_2\mid X\phi\mid\phi_1~\mathcal{U}~\phi_2.
\end{equation*}
In other words, any atomic proposition $\phi$ is an LTL formula. Any formula formed by joining atomic propositions using Boolean or temporal connectives is an LTL formula. Other operators can be defined accordingly. In particular, implication ($\implies$) operator ($\phi\implies\psi$) can be described as $\neg\phi\lor\psi$; eventually $(\Diamond)$ operator $\Diamond\phi$ can be written as $\Diamond\phi=True~\mathcal{U}~\phi$; always ($\Box$) operator $\Box\phi$ can be represented as $\Box\phi=\neg \Diamond\neg\phi$.

The semantics of LTL formulas are defined over infinite words in $2^{\Pi}$ \cite{baier2008principles}. Informally speaking, $\phi$ is true if and only if $\phi$ is true at the current time step. $\psi~\mathcal{U}~\phi$ is true if and only if $\psi\land\neg\phi$ is true until $\phi$ becomes true at some future time step. $\Box\phi$ is true if and only if $\phi$ is true for the current time step and all the future time. $\Diamond\phi$ is true if $\phi$ is true at some future time. $X\phi$ is true if and only if $\phi$ is true in the next time step. A word $\eta$ satisfying an LTL formula $\phi$ is denoted as $\eta\models\phi$.

Given any LTL formula, a deterministic Rabin automaton (DRA) can be constructed to represent the formula. A DRA is defined as follows.
\begin{definition}
(Deterministic Rabin Automaton): A deterministic Rabin automaton (DRA) is a tuple $\mathcal{R}=(Q,\Sigma,\delta,q_0,\text{Acc})$, where $Q$ is a finite set of states, $\Sigma$ is a finite set of symbols called alphabet, $\delta:Q\times\Sigma\rightarrow Q$ is the transition function, $q_0$ is the initial state and $\text{Acc}=\{(L(1),K(1)),(L(2),K(2)),\cdots,(L(Z),K(Z))\}$ is a finite set of Rabin pairs such that $L(z),K(z)\subseteq Q$ for all $z=1,2,\cdots,Z$ with $Z$ being a positive integer. 
\end{definition}
A run $\rho$ of a DRA over a finite input word $\eta=\eta_0\eta_1\cdots\eta_n$ is a sequence of states $q_0q_1\cdots q_n$ such that $(q_{k-1},\eta_k,q_k)\in\delta$ for all $0\leq k\leq n$. A run $\rho$ is accepted if and only if there exists a pair $(L(z),K(z))$ such that $\rho$ intersects with $L(z)$ finitely many times and intersects with $K(z)$ infinitely often. Denote the satisfaction of a formula $\phi$ by a run $\rho$ as $\rho\models\phi$.

\subsection{Stochastic Games}
A stochastic game is defined as follows \cite{fudenberg1991game}.
\begin{definition}\label{def: CMDP}
(Stochastic Game): A stochastic game (SG) $\mathcal{SG}$ is a tuple $\mathcal{SG}= (S,U_C,U_A,Pr,s_0,\allowbreak \Pi,\mathcal{L})$, where $S$ is a finite set of states, $U_C$ is a finite set of actions of the controller, $U_A$ is a finite set of actions of an adversary, $Pr: S\times U_C\times U_A\times S\rightarrow [0, 1]$ is a transition function where $Pr(s, u_C, u_A, s^{\prime})$ is the probability of a transition from state $s$ to state $s^\prime$ when the controller's action is $u_C$ and the adversary's action is $u_A$.  $s_0\in S$ is the initial state. $\Pi$ is a set of atomic propositions. $\mathcal{L}:S\rightarrow 2^{\Pi}$ is a labeling function, which maps each state to a subset of propositions that are true at each state.
\end{definition}

Denote the set of admissible actions for the controller and adversary at state $s$ as $U_C(s)$ and $U_A(s)$, respectively. Given a finite set $S$, we use the Kleene star $S^\ast$ and the $\omega$ symbol $S^\omega$ to denote the set obtained by concatenating elements from $S$ finitely and infinitely many times, respectively. Given an SG, the set of finite paths, i.e, the set of finite sequence of states, can be represented as $S^\ast$, while the set of infinite paths, i.e., the set of infinite sequence of states, can be represented as $S^\omega$. The strategies (or policies) that players can commit to can be classified into the following two categories.
\begin{itemize}
\item \emph{Pure strategy}: A pure strategy gives the action of the player as a deterministic function of the state. Suppose the players commit to pure strategies. Then a pure control strategy is defined as $\mu:S^\ast\rightarrow U_C$, which gives a specific control action, and a pure adversary strategy is defined as $\tau:S^\ast\rightarrow U_A$.
\item \emph{Mixed strategy}: A mixed strategy determines a probability distribution over all admissible pure strategies. Suppose the players commit to mixed strategy. Then a control policy for the controller is defined as $\mu:S^*\times U_C\rightarrow [0,1]$, which maps a finite path and the admissible action to a probability distribution over the set of actions $U_C(s_k)$ available at state $s_k$. A policy $\tau$ for the adversary is defined as $\tau:S^*\times U_A\rightarrow[0,1]$.
\end{itemize}
In this paper we focus on  computing the optimal mixed strategy. When a specific action is assigned with probability one, then mixed strategy reduces to pure strategy. A control policy is \textit{stationary} if it is only a function of the current state, i.e., $\mu: S\times U_C\rightarrow [0,1]$ is only dependent on the last state of the path. A stationary policy is said to be \emph{proper} if the probability of satisfying the given specification after finite steps is positive under this policy. Given a pair of policies $\mu$ and $\tau$, an SG reduces to a Markov chain (MC) whose state set is $S$ and transition probability from state $s$ to $s^\prime$ is $P^{\mu\tau}(s,s^{\prime})\triangleq\sum_{u_C\in U_C(s)}\sum_{u_A\in U_A(s)}\mu(s,u_C)\tau(s,u_A)Pr(s,u_C,u_A,s^\prime)$. Given a path $\beta\in S^\omega$, a word is generated as $\eta_\beta=\mathcal{L}(s_0)\mathcal{L}(s_1)\cdots$. The probability of satisfying an LTL formula $\phi$ under policies $\mu$ and $\tau$ on $\mathcal{SG}$ is denoted as $Pr_{\mathcal{SG}}^{\mu\tau}=Pr\{\eta_\beta\models\phi:\beta\in S^\omega\}$.

In the following, we review a subclass of stochastic games, denoted as Stackelberg games, involving two players \cite{fudenberg1991game}. In the Stackelberg setting, player $1$ (also called leader) commits to a strategy first. Then player $2$ (also known as follower) observes the strategy of the leader and plays its best response. The information structure under Stackelberg setting can be classified into the following two categories.
\begin{itemize}
	\item \emph{Turn-based games}: Exactly one player is allowed to take action at each time step. Turn-based games are used to model asynchronous interaction between players.
	\item \emph{Concurrent games}: All the players take actions simultaneously at each time step. Concurrent games are used to model synchronous interaction between players.
\end{itemize}
Unlike \cite{kattenbelt2009verification, quatmann2016parameter, kattenbelt2010game, chen2013prism}, which are turn-based and played with pure strategies, in this paper, we focus on concurrent games with players committing to mixed strategies. To demonstrate the efficiency of mixed strategies in a concurrent game, we consider a robot  moving to the right in a 1D space. The action sets for the controller and adversary are $\{move, stay\}$. If the controller and action take the same action at a given time step, then the robot will follow the specified action, i.e., move one step to the right under the action pair $(move,move)$ and stay in the current location under action pair $(stay,stay)$. The goal of the robot is to move to the location immediately to the right of its starting location. When the controller commits to a pure strategy, say $move$, then the adversary will always take action $stay$, and the robot will remain at its starting location. On the other hand, if the controller plays a mixed strategy, e.g., choosing $move$ and $stay$ with equal probability $1/2$ at each time step, then the robot has a $1/2$ probability to reach the desired location at each time step, and hence will reach the desired location within finite time with probability 1. On the other hand, in a turn-based setting where the adversary observes the controller's action before choosing its action at each time step, the adversary will always be able to choose the opposite of the controller's action and prevent the robot from moving. Hence while mixed strategies are beneficial in the concurrent game formulation, they are not beneficial in the turn-based game for this case.


The concept of Stackelberg equilibrium is used to solve Stackelberg games. The Stackelberg equilibrium is defined formally in the following.
\begin{definition}\label{def: Stackelberg equilibrium}
	(Stackelberg Equilibrium): Denote the utility that the leader gains in a stochastic game $\mathcal{SG}$ under leader follower strategy pair $(\mu,\tau)$ and the utility that the follower gains as $\mathcal{Q}_L(\mu,\tau)$ and $\mathcal{Q}_F(\mu,\tau)$, respectively. A pair of leader follower strategy $(\mu,\tau)$ is a Stackelberg equilibrium if leader's strategy $\mu$ is optimal given that the follower observes its strategy and plays its best response, i.e., $\mu=\argmax_{\mu^\prime\in\boldsymbol{\mu}} \mathcal{Q}_{L}(\mu^\prime,\mathcal{BR}(\mu^\prime))$, where $\boldsymbol{\mu}$ is the set of all admissible policies of the controller and $\mathcal{BR}(\mu^\prime)=\{\tau:\tau=\argmax\mathcal{Q}_F(\mu^\prime,\tau)\}$ is the best response to leader's strategy $\mu^\prime$ played by the follower.
\end{definition}

\subsection{ACPS and ACPC Problems}

We present some preliminary results on the average cost per stage (ACPS) problem  and average cost per cycle (ACPC) problem on MDP without the presence of adversary in this subsection. Both problems focus on deterministic control policies $\mu:S\rightarrow U_C$ and unichain MDP. An MDP is said to be unichain if for any control policy $\mu$, the induced MC is irreducible, i.e, the probability of reaching any state from any state on the MC is positive. Denote the cost incurred at state $s$ by applying the deterministic control policy $\mu$ as $g(s,\mu(s))$. The transition probability from state $s$ to $s^\prime$ via action $u$ on MDP is denoted as $Pr(s,u,s^\prime)$. Each transition to a new state is viewed as a completion of a stage. Then the objective of ACPS problem is to minimize 
\begin{equation}\label{eq: preliminary ACPS}
J_{\mu}(s)=\limsup_{N\rightarrow\infty}\frac{1}{N} \mathbb{E}\left\{\sum_{k=0}^Ng(s_k,\mu(s_k))|s_0=s\right\}
\end{equation}
over all deterministic stationary control policies.

It has been shown that a gain-bias pair $(J_\mu,h_\mu)$ for ACPS problem satisfies the properties stated as follows.
\begin{proposition}[{\cite{bertsekas1995dynamic}}]\label{prop: ACPS}
Assume the MDP is unichain. Then:
\begin{itemize}
\item the optimal ACPS $J_{\mu}^*(s_0)$ associated with each control policy $\mu$ is independent of initial state $s_0$, i.e., there exists a constant $J_{\mu}^*$ such that $J_{\mu}^*(s_0)=J_{\mu}^*$ for all $s_0\in S$;
\item there exists a vector $h$ such that 
$J_{\mu}^*+h_\mu(s)=\min_{\mu}\Big\{g(s,\mu(s))
+\sum_{s^\prime\in S}Pr(s,\mu(s),s^\prime)h(s^\prime)\Big\}.$
\end{itemize}
\end{proposition}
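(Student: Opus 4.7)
The plan is to establish the two claims using the standard apparatus for average-cost dynamic programming on finite-state MDPs. For part (i), the key lever is the unichain assumption: for every stationary policy $\mu$, the induced Markov chain is irreducible on $S$ and therefore admits a unique stationary distribution $\pi_\mu$, independent of the starting state. By the strong law of large numbers / ergodic theorem for finite irreducible Markov chains, the Cesàro average in~\eqref{eq: preliminary ACPS} converges almost surely to the spatial average
\begin{equation*}
J_\mu(s) \;=\; \sum_{s'\in S} \pi_\mu(s')\, g(s',\mu(s')),
\end{equation*}
which does not depend on $s$. Taking the infimum over $\mu$ preserves state-independence, giving a scalar $J_\mu^\ast$ that I can equally well write as a constant vector.

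For part (ii), I would use the classical \emph{vanishing discount} (Abelian) argument. Fix a reference state $\bar s \in S$ and, for $\alpha\in(0,1)$, let $V_\alpha$ denote the $\alpha$-discounted optimal value function, which satisfies the standard discounted Bellman equation
\begin{equation*}
V_\alpha(s) \;=\; \min_u \Bigl\{ g(s,u) + \alpha\!\sum_{s'\in S} Pr(s,u,s')\,V_\alpha(s') \Bigr\}.
\end{equation*}
Define the relative value $h_\alpha(s) := V_\alpha(s) - V_\alpha(\bar s)$. Subtracting $V_\alpha(\bar s)$ from both sides and using $(1-\alpha)V_\alpha(\bar s)\to J^\ast$ (Tauberian consequence of part (i)), I would pass to the limit along a sequence $\alpha_k\uparrow 1$ to obtain a vector $h$ and the optimality equation
\begin{equation*}
J_\mu^\ast + h(s) \;=\; \min_\mu\Bigl\{ g(s,\mu(s)) + \sum_{s'\in S} Pr(s,\mu(s),s')\,h(s')\Bigr\},
\end{equation*}
which is exactly the claimed form.

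The main obstacle is establishing that the family $\{h_\alpha\}_{\alpha<1}$ is uniformly bounded, so that a convergent subsequence exists as $\alpha\uparrow 1$. This is where the unichain hypothesis is essential: it guarantees that for every stationary policy the expected first-passage time from any state to $\bar s$ is finite, and a coupling/hitting-time bound then yields $|V_\alpha(s)-V_\alpha(\bar s)|\le C$ with $C$ independent of $\alpha$. With that bound in hand, Bolzano--Weierstrass on the finite state space $S$ produces a limit $h$, and uniform boundedness together with the continuity of the min over the finite action sets $U_C(s)$ lets me interchange limit and minimization to recover the Bellman equation. The remaining steps (verifying that any stationary minimizer in this equation is optimal, and uniqueness of $J_\mu^\ast$) are routine consequences of iterating the equation and invoking part (i).
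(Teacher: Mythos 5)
Your proposal is sound, but it follows a different route from the one the paper (and its reference) actually leans on. The paper states this proposition without proof, citing \cite{bertsekas1995dynamic}; however, when it proves its own generalization (Lemma~\ref{LEMMA: ACPS GAIN BIAS}), it uses the elementary first-passage decomposition: fix a recurrent reference state $s^\prime$, define $\xi(s)$ and $o(s)$ as the expected cost and expected number of stages to first reach $s^\prime$, set the gain $\zeta=\xi(s^\prime)/o(s^\prime)$ and the bias $b(s)=\xi(s)-\zeta\, o(s)$, and verify the gain-bias identity by direct substitution. You instead prove part (i) via the ergodic theorem for the irreducible induced chain (which is fine and essentially forced by the unichain definition used here) and part (ii) via the vanishing-discount argument, with the uniform boundedness of the relative values $h_\alpha(s)=V_\alpha(s)-V_\alpha(\bar s)$ as the technical crux. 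Both routes are standard and correct for finite unichain MDPs. The first-passage construction buys you explicit closed-form expressions for the gain and bias and avoids any compactness or subsequence extraction, which is why the paper can reuse it almost verbatim in the adversarial (min-max) setting of Lemma~\ref{LEMMA: ACPS GAIN BIAS}; your Abelian-limit argument is less constructive but connects the average-cost problem to the discounted one and extends more readily beyond finite state spaces. The one place where your sketch leaves real work undone is exactly the step you flag: the uniform bound $|V_\alpha(s)-V_\alpha(\bar s)|\le C$ needs the finite-expected-hitting-time estimate to be carried out uniformly over policies and over $\alpha$, and you should also note that passing to the limit only yields the optimality equation along a subsequence, after which one argues the limit is independent of the subsequence via part (i). Neither issue is a gap in the approach, only in the level of detail.
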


We present some preliminary results on the ACPC problem on MDP in the following. Denote the set of states that satisfy LTL formula $\phi$ as $S_{\phi}$. A cycle is completed when $S_{\phi}$ is visited. Therefore, a path starting from $s_0$ and ending in $S_{\phi}$ completes the first cycle, and the path starting from $S_{\phi}$ after completing the first cycle completes the second cycle when coming back to $S_{\phi}$. Denote the number of cycles that have been completed until stage $N$ as $C(N)$. The ACPC problem is described as given an MDP and an LTL formula $\phi$, find a control policy $\mu$ that minimizes 
\begin{equation}\label{eq: preliminary ACPC}
J_{\mu}(s_0)=\underset{N\rightarrow\infty}{\lim\sup}~ \mathbb{E}\left\{\frac{\sum_{k=0}^Ng(s_k,\mu(s_k))}{C(N)}\Big|\eta_\mu\models\phi\right\},
\end{equation}
where $\eta_{\mu}=\mathcal{L}(s_0)\mathcal{L}(s_1)\cdots$ is the word generated by the path $s_0s_1\cdots$ induced by deterministic control policy $\mu$.

It has been shown that the following proposition holds for the ACPC problem.
\begin{proposition}[{{\cite{ding2014optimal}}}]
	Assume the MDP is unichain. Then:
	\begin{itemize}
		\item the optimal ACPC $J_{\mu}^*(s_0)$ associated with each control policy $\mu$ is independent of initial state $s_0$, i.e., there exists a constant $J_{\mu}^*$ such that $J_{\mu}^*(s_0)=J_{\mu}^*$ for all $s_0\in S$;
		\item there exists some vector $h$ such that the following equation holds
		$J_{\mu}^*+h(s)=\min_{\mu}\Big\{g(s,\mu(s))
		+\sum_{s^\prime\in S}Pr(s,\mu(s),s^\prime)h(s^\prime)
		+J_{\mu}^*\sum_{s^\prime\in S}Pr(s,\mu(s),s^\prime)\Big\}.$
	\end{itemize}
\end{proposition}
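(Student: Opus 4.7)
The plan is to reduce the ACPC problem to a standard average-cost-per-stage problem by exploiting the renewal structure induced by returns to $S_\phi$, and then invoke Proposition~1 (the ACPS result).

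First, I would use the unichain assumption: for every stationary policy $\mu$ the induced Markov chain has a single recurrent class, so (assuming $S_\phi$ is reachable) the hitting time to $S_\phi$ has finite mean from every state, and the number of completed cycles $C(N)$ grows linearly in $N$. Restricting attention to proper policies, under which $\eta_\mu \models \phi$ holds almost surely so the conditioning in~(\ref{eq: preliminary ACPC}) is trivial, the renewal--reward theorem gives that the long-run average cost per cycle equals the ratio of the expected cost accrued between two consecutive visits to $S_\phi$ to the expected length of that excursion. Because this ratio is determined by the stationary distribution of the induced Markov chain rather than by the starting state, $J_\mu(s_0)$ is independent of $s_0$; taking the minimum over $\mu$ preserves this invariance and yields a constant $J_\mu^*$.

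Second, for the Bellman-type optimality equation I would introduce a modified one-stage cost $\tilde g(s,u) = g(s,u) - J_\mu^*\,\psi(s,u)$, where $\psi(s,u)$ is the expected indicator that the forthcoming transition completes a cycle (so that the long-run ACPS of $\tilde g$ is zero under an optimal proper policy). Applying Proposition~1 to $\tilde g$ on the unichain MDP produces a relative value function $h$ satisfying $h(s) = \min_u\{\tilde g(s,u) + \sum_{s'} Pr(s,u,s')\,h(s')\}$; moving the $\psi$-term back across the equation reproduces the stated form of the ACPC Bellman equation.

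The hard part will be two-fold. First, pinning down the precise location of the cycle-completion indicator in the Bellman equation: as written, $\sum_{s'\in S} Pr(s,\mu(s),s')$ collapses to~$1$ and must be reinterpreted (most naturally as a sum over $s'\in S_\phi$, giving a $J_\mu^*\!\cdot\!\psi(s,\mu(s))$ correction) for the equation to carry content, so I would derive the symmetric form first and then reconcile conventions. Second, making the conditional expectation in~(\ref{eq: preliminary ACPC}) rigorous when $\{\eta_\mu\models\phi\}$ can have probability strictly less than one: I would handle this by restricting the optimization to the (non-empty, by unichain) class of proper policies so that the conditioning is vacuous, and by extending $J_\mu^*$ by convention (e.g., $+\infty$) on improper policies so that the outer minimization remains well-posed.
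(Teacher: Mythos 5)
The paper does not actually prove this proposition: it is quoted verbatim as a background result with a citation to \cite{ding2014optimal}, so there is no in-paper proof to match against. The closest internal analogue is the paper's own adversarial generalization in Section~VI (Lemma~2, Lemma~3, Theorem~2), which proceeds by the embedded-chain transformation $\hat{P}=(I-P_{\text{out}})^{-1}P_{\text{in}}$, $\hat{g}=(I-P_{\text{out}})^{-1}g$, i.e., it changes \emph{both} the kernel and the cost so that one stage of the transformed ACPS problem corresponds to one full cycle. Your route is genuinely different in that second step: you keep the original kernel and instead perturb only the cost to $\tilde g = g - J_\mu^*\psi$, which is a Dinkelbach-style fractional-programming reduction. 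Your first step (renewal--reward under the unichain assumption to get state-independence) is essentially the same argument the paper uses in Lemma~2 via the first-passage quantities $\xi(s)$, $o(s)$ and $\zeta=\xi(s')/o(s')$. You also correctly diagnose that the displayed term $\sum_{s'\in S}Pr(s,\mu(s),s')$ is degenerate as written; the published form carries $J^*\sum_{s'\notin S_\phi}Pr(s,\mu(s),s')$, which is algebraically identical to your $-J^*\psi$ correction after moving $J^*$ across the equation, so your convention is consistent even though the parenthetical ``sum over $s'\in S_\phi$'' names the complementary set.

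The one step you should not wave at is the justification for applying Proposition~1 to $\tilde g$. First, $\tilde g$ depends on the unknown $J_\mu^*$, so the argument must be run as a fixed-point/verification claim: the ACPS problem with cost $g-\lambda\psi$ has optimal average cost zero precisely when $\lambda$ equals the optimal ACPC value. Second, and more substantively, you need the equivalence $\min_\mu\bigl(A_\mu-\lambda B_\mu\bigr)=0 \iff \min_\mu A_\mu/B_\mu=\lambda$ (with $A_\mu$ the per-stage cost rate, $B_\mu>0$ the cycle-completion rate); without this, the minimizer of the modified ACPS problem need not coincide with the ACPC minimizer, and the $\min_\mu$ appearing in your derived Bellman equation would be over the wrong objective. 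This is standard fractional programming and holds here because $B_\mu$ is bounded away from zero for proper policies on a unichain MDP, but it is the load-bearing step of the reduction and must be stated and proved, not assumed.
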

%
%

\section{System Model}\label{sec: system model}

In this section, we present the system model. We consider the following discrete-time finite state system
\begin{equation}\label{eq: dynamics}
x(t+1) = f(x(t),u_C(t),u_A(t),\vartheta(t)),~\forall t=0,1,\cdots,
\end{equation}
where $x(t)$ is the finite system state, $u_C(t)$ is the control input from the controller, $u_A(t)$ is the attack signal from the adversary, and $\vartheta(t)$ is stochastic disturbance.

In system \eqref{eq: dynamics}, there exists a strategic adversary that can tamper with the system transition. In particular, the controller and adversary jointly determine the state transition. For instance, an adversary that launches false data injection attack modifies the control input as $u(t)=u_C(t)+u_A(t)$; an adversary that launches denial-of-service attack manipulates the control input as $u(t)=u_C(t)\cdot u_A(t)$, where $u_A(t)\in\{0,1\}$.

In security domain, Stackelberg game is widely used to model systems in the presence of malicious attackers. In Stackelberg setting, the controller plays as the leader and the adversary plays as the follower. In this paper, we adopt the concurrent Stackelberg setting. The controller first commits to its control strategy. The adversary can stay outside for indefinitely long time to observe the strategy of the controller and then chooses its best response to the controller's strategy. However, at each time step, both players must take actions simultaneously. The system is given some specification that is modeled using LTL.

To abstract system \eqref{eq: dynamics} as a finite state/action SG, we propose a heuristic simulation based algorithm as shown in Algorithm \ref{alg: CMDP}, which is generalized from the approaches proposed in \cite{lahijanian2009probabilistic,wolff2012robust}. The difference between Algorithm \ref{alg: CMDP} and algorithms in \cite{lahijanian2009probabilistic,wolff2012robust} is that Algorithm \ref{alg: CMDP} considers the presence of adversary. Algorithm \ref{alg: CMDP} takes the dynamical system \eqref{eq: dynamics}, the set of sub-regions of state space $\{X_1, \cdots, X_n\}$ and actions as inputs. We observe that the choice of subregions $X_{1},\ldots,X_{n}$ may affect the accuracy of the model, however, choice of the subregions is beyond the scope of this work. 
For each sub-region $X_{i}$ and pair of (control, adversary) inputs $(u_{C},u_{A})$, we randomly select $K$ sample states in $X_{i}$ and adversary and control inputs that map to $u_{C}$ and $u_{A}$. We compute the probability distribution over the set of sub-regions $\{X_{j}\}$ that the system can transition to following \eqref{eq: dynamics}, and update $Pr(X_{i},u_{C},u_{A},X_{j})$ accordingly for all $X_j$ (Algorithm \ref{alg: CMDP}). To approximate the transition probability, Monte Carlo simulation or particle filter can be used \cite{wolff2012robust, cappe2007overview, lahijanian2009probabilistic}.

\begin{center}
	\begin{algorithm}[!htp]
		\caption{Algorithm for constructing a stochastic game approximation of a system.}
		\label{alg: CMDP}
		\begin{algorithmic}[1]
			\Procedure{Create\_Stochastic\_Game}{ $X_{1},\ldots,X_{n}$}
			\State \textbf{Input:} Dynamics \eqref{eq: dynamics}, set of subsets $X_{1},\ldots,X_{n}$
			\State \textbf{Output:} Stochastic game  $\mathcal{SG}=(S,U_C,U_A,Pr,s_0,\Pi,\mathcal{L})$
			\State Initialize $K$
			\State $S=\{X_1,\ldots,X_n\}$ and $\mathcal{L}$ is determined accordingly
			\State Generate control primitive sets $U_C=\{u_{C_1},u_{C_2}\cdots,u_{C_\Xi}\}$ and $U_A=\{u_{A_1},u_{A_2}\cdots,u_{A_\Gamma}\}$
			\For{$i=1,\ldots,n$}
			\For{all $u_{C} \in U_{C}$ and $u_{A} \in U_{A}$}
			\For{$k=1,\ldots,K$}
			\State $x \leftarrow$ sampled state in $X_{i}$
			\State $\hat{u}_{C}, \hat{u}_{A} \leftarrow$ sampled  inputs from $u_{C},u_{A}$
			\State $j \leftarrow$ region containing $f(x, \hat{u}_{C}, \hat{u}_{A},\vartheta)$
			
			\State Invoke particle filter to approximate transition probabilities $Pr$ between sub-region $i$ and $j$ for all $i$ and $j$. 
			\EndFor
			\EndFor
			\EndFor
			\EndProcedure
		\end{algorithmic}
	\end{algorithm}
\end{center}

In the following, we present two applications in the security domain that can be formulated using the proposed framework.

\subsubsection{Infrastructure Protection in Power System}\label{ex: power system}
The proposed framework can capture attack-defense problems on power system as shown in the following. An attack-defense problem on power system is investigated in \cite{ma2011game}.

The players involved in this example are the power system administrator and adversary. The adversary aims to disrupt the transmission lines in power network, while the administrator deploys resources to protect critical infrastructures or repair damaged infrastructures. The dynamics of power system is modeled as $x(t+1)=f(x(t),u_C(t),u_A(t))$, where $x(t)$ is the state vector, $u_C(t)$ and $u_A(t)$ are the inputs from the administrator and adversary, respectively. Depending on the focus of the administrator, the state may contain bus voltages, bus power injections, network frequency, and so on. The actions of the administrator $U_C$ and adversary $U_A$, respectively, are the actions to protect (by deploying protection or repair resources) and damage (by opening the breakers at ends of) the transmission lines. If an attack is successful, then the transmission line is out of service, which will result in dramatic change on state vector. Thus the states evolve following the joint actions of administrator and adversary. Moreover, the probability of the occurrence of events, i.e., the transmission line is out of service, is jointly determined by the actions of adversary and administrator (or defender). The specifications that can be given to the system might include reachability (e.g., 'eventually satisfy optimal power flow equation': $\Diamond OPF$) and reactivity (e.g., 'if voltage exceeds some threshold, request load shedding from demand side': $\Box(voltage\_alarm\implies X~DR)$).


\subsubsection{Networked Control System under Attacks}\label{ex: DoS attack}
In the following, we present an example on control synthesis for networked control system under deception attacks.

The system is modeled as a discrete linear time invariant system $x(k+1)=Ax(k)+Bu(k)+\vartheta(k),~k=0,1\cdots,$
where $x(k)$ is the system state, $u(k)$ is the compromised control input and $\vartheta(k)$ is independent Gaussian distributed disturbance. There exists an intelligent and strategic adversary that can compromise the control input of the system by launching deception attack. When the adversary launches deception attack on system actuator \cite{zhu2011stackelberg}, then the control input is represented as $u(k) = u_C(k)+u_A(k).$ Typical specifications that are assigned to the system include stability and safety (e.g., `eventually reach stable status while not reaching unsafe state': $\Diamond\Box\text{stable}\land\Box\neg\text{unsafe}$). 
\section{Problem Formulation - Maximizing Satisfaction Probability}\label{sec:formulation1}

In this section, we formulate the problem of maximizing the probability of satisfying a given LTL specification in the presence of an adversary. We first present the problem formulation, and then give a solution algorithm for computing the optimal control policy. 

\subsection{Problem Statement}

The problem formulation is as follows.
\begin{problem}\label{problem: max satisfaction probability}
Given a stochastic game $\mathcal{SG}$ and an LTL specification $\phi$, compute a control policy $\mu$ that maximizes the probability of satisfying the specification $\phi$ under any adversary policy $\tau$, i.e.,
\begin{equation}\label{eq: LTL maximization}
\max_{\mu}\min_{\tau}Pr_{\mathcal{SG}}^{\mu\tau}(\phi).
\end{equation}
\end{problem}

Denote the probability of satisfying specification $\phi$ as satisfaction probability. The policies $\mu$ and $\tau$ that achieve the max-min value of (\ref{eq: LTL maximization}) can be interpreted as an equilibrium defined in Definition \ref{def: Stackelberg equilibrium} in a zero-sum Stackelberg game between the controller and adversary, in which the controller first chooses a randomized policy $\mu$, and the adversary observes $\mu$ and selects a policy $\tau$ to minimize $Pr_{\mathcal{SG}}^{\mu\tau}(\phi|s)$. By Von Neumann's theorem \cite{neumann1928theorie}, the satisfaction probability at equilibrium must exist. We restrict our attention to the class of stationary policies, leaving the general case for future work. We have the following preliminary lemma.
\begin{lemma}\label{lemma:Stackelberg-Shapley}
Let satisfaction probability $v(s) = \max_{\mu}{\min_{\tau}{Pr_{\mathcal{SG}}^{\mu\tau}(\phi | s)}}$. Then
\begin{multline}\label{eq:Stackelberg-Shapley}
v(s) = \max_\mu\min_\tau\sum_{u_{C} \in U_{C}(s)}\sum_{u_{A} \in U_{A}(s)}\sum_{s^{\prime} \in S}\mu(s,u_{C})\\\tau(s,u_{A})v(s^{\prime}) Pr(s,u_{C},u_{A},s^{\prime}).
\end{multline}
Conversely, if $v(s)$ satisfies (\ref{eq:Stackelberg-Shapley}), then $v(s) = \max_{\mu}{\min_{\tau}{Pr_{\mathcal{SG}}^{\mu\tau}(\phi |s)}}$. Moreover, the satisfaction probability $v$ is unique.
\end{lemma}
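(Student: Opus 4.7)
The plan is to reduce the LTL satisfaction objective to a reachability objective on a product stochastic game and then obtain (\ref{eq:Stackelberg-Shapley}) as the Shapley equation for the resulting zero-sum reachability game with leader commitment. First I would form the product $\mathcal{SG}\otimes\mathcal{R}_\phi$ of the game with a deterministic Rabin automaton for $\phi$, and characterize $\phi$-satisfying runs via the accepting maximal end components (AMECs) of the product. The value $Pr_{\mathcal{SG}}^{\mu\tau}(\phi|s)$ then equals the probability that the Markov chain induced by $(\mu,\tau)$ eventually enters an AMEC; collapsing the union of AMECs into an absorbing accepting set $B^+$ and the co-reachable avoid states into an absorbing rejecting set $B^-$ turns the problem into a reachability game with $v\equiv 1$ on $B^+$ and $v\equiv 0$ on $B^-$. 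Restricted to non-absorbing states, the recursion (\ref{eq:Stackelberg-Shapley}) is precisely the Shapley equation for this reachability game.

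To establish that $v$ satisfies (\ref{eq:Stackelberg-Shapley}), I would use a one-step conditioning argument. For any stationary pair $(\mu,\tau)$, the Markov property gives
\begin{equation*}
Pr_{\mathcal{SG}}^{\mu\tau}(\phi|s) = \sum_{u_C,u_A,s'}\mu(s,u_C)\tau(s,u_A)Pr(s,u_C,u_A,s')Pr_{\mathcal{SG}}^{\mu\tau}(\phi|s').
\end{equation*}
Taking $\max_\mu\min_\tau$ on both sides, the outer optimization over stationary policies decomposes into the per-state one-shot matrix game on the right, yielding (\ref{eq:Stackelberg-Shapley}). This decomposition rests on two facts: the minimax theorem, which ensures that at each state the Stackelberg value with mixed strategies coincides with the simultaneous-move zero-sum matrix-game value; and the existence of stationary optimal policies in the concurrent reachability game, a consequence of Shapley's theorem adapted to the reachability setting.

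For the converse and uniqueness I would analyze the Shapley operator $T:[0,1]^{|S|}\rightarrow[0,1]^{|S|}$ defined by the right-hand side of (\ref{eq:Stackelberg-Shapley}) subject to the boundary values on $B^+$ and $B^-$. I would show $T$ is monotone and that its iterates are an eventual contraction on the non-absorbing sub-block, using the fact that from every non-absorbing state there is, under any pair of stationary mixed policies, a uniform positive probability $\rho>0$ of being absorbed within a bounded number of steps (a standard consequence of the AMEC construction). A Banach-type argument then gives a unique fixed point of $T$; by the previous paragraph $v$ is a fixed point, so any $w$ satisfying (\ref{eq:Stackelberg-Shapley}) coincides with $v$.

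The principal obstacle is justifying the step-two decomposition rigorously: the interchange of the global $\max_\mu\min_\tau$ with the per-state one-shot optimization is nontrivial because the controller commits to a full stationary policy before the adversary observes it, and one must verify that this global commitment is equivalent to the per-state Stackelberg commitment. I plan to handle this by coupling Shapley's theorem (which supplies stationary optimal policies for the simultaneous-move version) with the observation that leader commitment with mixed strategies in a zero-sum game cannot be worse than simultaneous play; together these pin both the Stackelberg and simultaneous-move values to the unique fixed point of $T$.
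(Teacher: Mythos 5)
Your overall architecture (product with a DRA, collapse of accepting end components into an absorbing target, one-step conditioning to get the recursion) is reasonable, but it front-loads machinery the paper defers to Proposition 2; the paper proves this lemma directly with an operator argument. It defines $T_{\mu\tau}$, $T_{\mu}$, and $T$, uses $v = T_{\mu}v \le Tv$ together with boundedness and monotonicity of $T$ (monotone convergence, \emph{not} contraction) to get $v \le v^{\ast} = \lim_{k}T^{k}v$, and then argues that the per-state maximizer of the one-step equation attains $v^{\ast}$, forcing $v = v^{\ast}$ at a Stackelberg equilibrium. Your handling of the ``principal obstacle''---interchanging the global $\max_{\mu}\min_{\tau}$ with per-state one-shot games via von Neumann's minimax theorem for the stage games---is consistent with the paper, which likewise relies on the coincidence of Stackelberg and simultaneous-move values under mixed strategies.

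The genuine gap is in your uniqueness argument. The claim that from every non-absorbing state there is, under \emph{any} pair of stationary mixed policies, a uniform positive probability of absorption within a bounded number of steps is false, and it is not a consequence of the AMEC construction. The end-component analysis guarantees that the controller has \emph{some} policy forcing progress toward the accepting set; it says nothing about arbitrary policy pairs, and a suboptimal pair can trap the play forever in a non-accepting cycle (which is precisely what the adversary attempts). Consequently the undiscounted reachability operator is not an eventual contraction, and it generically admits multiple fixed points (the all-ones vector satisfies the unconstrained recursion); the value is only the \emph{least} fixed point above the boundary data. A Banach-type argument therefore cannot close the converse and uniqueness steps. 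The paper sidesteps this by proving a weaker statement: for any two Stackelberg-equilibrium values $v$ and $v^{\prime}$ it shows $v = Tv \ge T_{\mu^{\prime}}v$, iterates using monotonicity to obtain $v \ge v^{\prime}$, and concludes by symmetry---that is, uniqueness among values realized by equilibrium policies, not uniqueness of fixed points of $T$. To repair your proof you would need to either adopt that monotone-domination argument or impose and exploit the boundary conditions explicitly and establish a least-fixed-point characterization; the contraction route as written does not go through.
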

\begin{proof}
In the following, we will first show the forward direction. We let $n = |S|$ and define three operators $T_{\mu\tau}: [0,1]^{n} \rightarrow [0,1]^{n}$, $T_{\mu}: [0,1]^{n} \rightarrow [0,1]^{n}$, and $T: [0,1]^{n} \rightarrow [0,1]^{n}$.
\begin{IEEEeqnarray*}{rCl}
(T_{\mu\tau}v)(s) &=& \sum_{s^{\prime}}{Pr(s,\mu,\tau,s^{\prime})v(s^\prime)} \\
(T_{\mu}v)(s) &=& \min_{\tau}{\sum_{s^{\prime}}{Pr(s,\mu,\tau,s^{\prime})v(s^{\prime})}} \\
(Tv)(s) &=& \max_{\mu}{\min_{\tau}{\sum_{s^{\prime}}{Pr(s,\mu,\tau,s^{\prime})v(s^{\prime})}}}
\end{IEEEeqnarray*}
where $Pr(s,\mu,\tau,s^{\prime}) = \sum_{u_{C} \in U_{C}(s)}\sum_{u_{A} \in U_{A}(s)}\mu(s,u_{C})\allowbreak\tau(s,u_{A})\allowbreak Pr(s,u_{C},u_{A},s^{\prime})$. Suppose that $\mu$ is a Stackelberg equilibrium with $v(s)$ equal to the satisfaction probability for state $s$, and yet (\ref{eq:Stackelberg-Shapley}) does not hold. We have that $v = T_{\mu}v$, since $v$ is the optimal policy for the MDP defined by the policy $\mu$~\cite{bertsekas1995dynamic}. On the other hand, $T_{\mu}v \leq Tv$. Composing $T$ and $T_{\mu}$ $k$ times and taking the limit as $k$ tends to infinity yields $v = \lim_{k \rightarrow \infty}{T_{\mu}^{k}v} \leq \lim_{k \rightarrow \infty}{T^{k}v} \triangleq v^{\ast}.$ The convergence of $T^{k}v$ to a fixed point $v^{\ast}$ follows from the fact that $T$ is a bounded and monotone nondecreasing operator. Furthermore, choosing the policy $\mu(s)$ at each state as the maximizer of (\ref{eq:Stackelberg-Shapley}) yields a policy with satisfaction probability $v^{\ast}$. Hence $v \leq v^{\ast}$. If $v(s) = v^{\ast}(s)$ for all states $s$, then (\ref{eq:Stackelberg-Shapley}) is satisfied, contradicting the assumption that the equation does not hold. On the other hand, if $v(s) < v^{\ast}(s)$ for some state $s$, then $\mu$ is not a Stackelberg equilibrium.

We next show that the vector $v$ satisfying (\ref{eq:Stackelberg-Shapley}) is unique. Since every Stackelberg equilibrium satisfies (\ref{eq:Stackelberg-Shapley}), if the vector $v$ is unique, then the vector $v$ must be a Stackelberg equilibrium. Suppose that uniqueness does not hold, and let $\mu$ and $\mu^{\prime}$ be Stackelberg equilibrium policies with corresponding satisfaction probabilities $v$ and $v^{\prime}$. We have that $v = Tv \geq T_{\mu^{\prime}}v$. Composing $k$ times and taking the limit as $k$ tends to infinity, we have $v = \lim_{k \rightarrow \infty}{T^{k}v} \geq \lim_{k \rightarrow \infty}{T_{\mu^{\prime}}^{k}v} = v^{\prime}.$ By the same argument, $v^{\prime} \geq v$, implying that $v = v^{\prime}$ and thus uniqueness holds.
\end{proof}
By Lemma \ref{lemma:Stackelberg-Shapley}, we have that the satisfaction probability for some state $s$ can be computed as the linear combination of the satisfaction probabilities of its neighbor states, where the coefficients are the transition probabilities jointly determined by the control and adversary policies. Lemma \ref{lemma:Stackelberg-Shapley} provides us the potential to apply iterative algorithm to compute the satisfaction probability.
\subsection{Computing the Optimal Policy}
Motivated by model checking algorithms \cite{baier2008principles}, we first construct a product SG. Then we analyze Problem \ref{problem: max satisfaction probability} on the product SG. A product SG is defined as follows.

\begin{definition}\label{def: product SG}
\emph{(Product SG):} Given an SG $\mathcal{SG} = (S,U_C,U_A,Pr,\mathcal{L},\Pi)$ and a DRA $\mathcal{R}=(Q,\Sigma,\delta,q_0,\text{Acc})$, a (labeled) product SG is a tuple $\mathcal{G}=(S_\mathcal{G},U_C,U_A,Pr_\mathcal{G},\text{Acc}_\mathcal{G})$, where $S_\mathcal{G}=S\times Q$ is a finite set of states, $U_C$ is a finite set of control inputs, $U_A$ is a finite set of attack signals, $Pr_\mathcal{G}((s,q),u_C,u_A,(s^\prime,q^\prime))=Pr(s,u_C,u_A,s^\prime)$ if $\delta(q,\mathcal{L}(s^\prime))=q^\prime$, $\text{Acc}_\mathcal{G}=\{(L_\mathcal{G}(1),K_\mathcal{G}(1)),\allowbreak(L_\mathcal{G}(2), K_\mathcal{G}(2)),\cdots,(L_\mathcal{G}(Z),K_\mathcal{G}(Z))\}$ is a finite set of Rabin pairs such that $L_\mathcal{G}(z),K_\mathcal{G}(z)\subseteq S_\mathcal{G}$ for all $z=1,2,\cdots,Z$ with $Z$ being a positive integer. In particular, a state $(s,q)\in L_\mathcal{G}(z)$ if and only if $q\in L(z)$, and a state $(s,q)\in K_\mathcal{G}(z)$ if and only if $q\in K(z)$.
\end{definition}

By Definition \ref{def: CMDP} and Definition \ref{def: product SG}, we have the following observations. First, since the transition probability is determined by $\mathcal{SG}$ and the satisfaction condition is determined by $\mathcal{R}$, the satisfaction probability of $\phi$ on $\mathcal{SG}$ is equal to the satisfaction probability of $\phi$ on the product SG $\mathcal{G}$. Second, we can generate the corresponding path $s_0s_1\cdots$ on $\mathcal{SG}$ given a path $(s_0,q_0)(s_1,q_1)\cdots$ on the product SG $\mathcal{G}$. Finally, given a control policy $\mu$ synthesized on the product SG $\mathcal{G}$, a corresponding control policy $\mu_\mathcal{SG}$ on $\mathcal{SG}$ is obtained by letting $\mu_\mathcal{SG}(s_i)=\mu((s_i,q))$ for all time step $i$ \cite{wolff2012robust,baier2008principles}. Due to these one-to-one correspondence relationships, in the following, we analyze Problem \ref{problem: max satisfaction probability} on the product SG $\mathcal{G}$ and present an algorithm to compute the optimal control policy. When the context is clear, we use $s$ to represent state $(s,q)\in S_\mathcal{G}$.

We next introduce the concept of Generalized Accepting Maximal End Component (GAMEC), which is generalized from accepting maximal end component (AMEC) on MDP. 

\begin{definition}\label{def: subSG}
\emph{(Sub-SG):} A sub-SG of an SG $\mathcal{SG}= (S,U_C,U_A,Pr,s_0,\Pi,\mathcal{L})$ is a pair of states and actions $(C,D)$ where $\emptyset\neq C\subseteq S$ is a set of states, and $D:C\rightarrow 2^{U_C(s)}$ is an enabling function such that $D(s)\subseteq U_C(s)$ for all $s\in C$ and $\{s^\prime|Pr(s,u_C,u_A,s^\prime)>0,\forall u_A\in U_A(s),s\in C\}\subseteq C$.
\end{definition}

By Definition \ref{def: subSG}, we have a sub-SG is also an SG. Given Definition \ref{def: subSG}, a Generalized Maximal End Component (GMEC) is defined as follows.

\begin{definition}\label{def: GMEC}
A Generalized End Component (GEC) is a sub-SG $(C,D)$ such that the underlying digraph $G_{(C,D)}$ of sub-SG $(C,D)$ is strongly connected. A GMEC is a GEC $(C,D)$ such that there exists no other GEC $(C^\prime,D^\prime)\neq (C,D)$, where $C\subseteq C^\prime$ and $D(s)\subseteq D^\prime(s)$ for all $s\in C$.
\end{definition}
\begin{definition}\label{def: GAMEC}
A GAMEC on the product SG $\mathcal{G}$ is a GMEC if there exists some $(L_\mathcal{G}(z),K_\mathcal{G}(z))\in\text{Acc}_\mathcal{G}$ such that $L_\mathcal{G}(z)\cap C=\emptyset$ and $K_\mathcal{G}(z)\subseteq C$.
\end{definition}

By Definition \ref{def: GAMEC}, we have a set of states constitutes a GAMEC if there exists a control policy such that for any initial states in the GAMEC, the system remains in the GAMEC with probability one and the specification is satisfied with probability one. We denote the set of GAMECs as $\mathcal{C}$, and the set of states that constitute GAMEC as accepting states. Algorithm \ref{algo:GAMEC} is used to compute the set of GAMECs. Given a product SG $\mathcal{G}$, a set of GAMECs $\mathcal{C}$ can be initialized as $C=\emptyset$ and $D(s)=U_C(s)$ for all $s$. Also, we define a temporary set $\mathcal{C}_{temp}$ which is initialized as $\mathcal{C}_{temp}=S_\mathcal{G}$. Then from line 8 to line 17, we compute a set of states $R$ that should be removed from GMEC. The set $R$ is first initialized to be empty. Then for each state $s$ in each nontrivial strongly connected component (SCC) of the underlying diagraph, i.e., the SCC with more than one states, we modify the admissible actions at state $s$ by keeping the actions that can make the system remain in $C$ under any adversary action. If there exists no such admissible action at state $s$, then the state $s$ is added into $R$. From line 18 to line 26, we examine if there exists any state $s^\prime$ in current GMEC that will steer the system into states in $R$. In particular, by taking action $u_C$ at each state $s^\prime$, if there exists some adversary action $u_A$ such that the system is steered into some state $s\in R$, then $u_C$ is removed from $U_C(s^\prime)$. Moreover, if there exists no admissible action at state $s^\prime$, then $s^\prime$ is added to $R$. Then we update the GMEC set as shown from line 27 to line 32. This procedure is repeated until no further update can be made on GMEC set. Line 34 to line 40 is to find the GAMEC following Definition \ref{def: GAMEC}. Given the set of GAMECs $\mathcal{C}=\{(C_1,D_1),\cdots,(C_h,D_h),\cdots,(C_{|\mathcal{C}|},D_{|\mathcal{C}|})\}$ returned by Algorithm \ref{algo:GAMEC}, the set of accepting states $\mathcal{E}$ is computed as $\mathcal{E}=\cup_{h=1}^{|\mathcal{C}|}C_h$.

The main idea to computing the solution to \eqref{eq: LTL maximization} is to show that the max-min probability of \eqref{eq: LTL maximization} is equivalent to maximizing (over $\mu$) the worst-case probability of reaching the set of accepting states $\mathcal{E}$. Denote the probability of reaching the set of accepting states $\mathcal{E}$ as reachability probability. In the following, we formally prove the equivalence between the worst-case satisfaction probability of \eqref{eq: LTL maximization} and the worst-case reachability probability. Then, we present an efficient algorithm for computing a policy $\mu$ that maximizes the worst-case probability of reaching $\mathcal{E}$, with the proofs of the correctness and convergence of the proposed algorithm. In particular, our proposed solution is based on the following.

 \begin{center}
  	\begin{algorithm}
  	\small
  		\caption{Computing the set of GAMECs $\mathcal{C}$.}
  		\label{algo:GAMEC}
  		\begin{algorithmic}[1]
  			\Procedure{Compute\_GAMEC}{$\mathcal{G}$}
  			\State \textbf{Input}: Product SG $\mathcal{G}$
  			\State \textbf{Output:} Set of GAMECs $\mathcal{C}$
  			\State \textbf{Initialization:} Let $D(s)=U_C(s)$ for all $s\in S_\mathcal{G}$. Let $\mathcal{C}=\emptyset$ and $\mathcal{C}_{temp}=\{S_\mathcal{G}\}$
  			\Repeat 
  			\State $\mathcal{C}=\mathcal{C}_{temp}$, $\mathcal{C}_{temp}=\emptyset$
  			\For{$C\in\mathcal{C}$}
  			\State $R=\emptyset$ \Comment{$R$ is the set of states that should be removed}
  			\State Let $SCC_1,\cdots,SCC_n$ be the set of nontrivial strongly connected components (SCC) of the underlying diagraph $G_{(C,D)}$
  			\For{$i=1,\cdots,n$}
  			\For{each state $s\in SCC_i$}
  			\State $D(s)=\{u_C\in U_C(s)|s^\prime\in C~\text{where }Pr(s,u_C,u_A,s^\prime)>0,~\forall u_A\in U_A(s)\}$
  			\If{$D(s)=\emptyset$}
  			\State $R=R\cup\{s\}$
  			\EndIf
  			\EndFor
  			\EndFor
  			\While{$R\neq\emptyset$}
  			\State dequeue $s\in R$ from $R$ and $C$
  			\If{there exist $s^\prime\in C$ and $u_C\in U_C(s^\prime)$ such that $Pr(s^\prime,u_C,u_A,s)>0$ under some $u_A\in U_A(s^\prime)$}
  			\State $D(s^\prime)=D(s^\prime)\setminus\{u_C\}$
  			\If{$D(s^\prime)=\emptyset$}
  			\State $R=R\cup\{s^\prime\}$
  			\EndIf
  			\EndIf
  			\EndWhile
  			\For{$i=1,\cdots,n$}
  			\If{$C\cap SCC_i\neq\emptyset$}
  			\State $\mathcal{C}=\mathcal{C}_{temp}\cup\{C\cap SCC_i\}$
  			\EndIf
  			\EndFor
  			\EndFor
  			\Until{$\mathcal{C}=\mathcal{C}_{temp}$}
  			\For {$C\in\mathcal{C}$}
  			\For{$(L_\mathcal{G}(z),K_\mathcal{G}(z))\in\text{Acc}_\mathcal{G}$}
  			\If{$L_\mathcal{G}(z)\cap C\neq\emptyset$ or $K_\mathcal{G}(z)\not\subseteq C$}
  			\State $\mathcal{C}=\mathcal{C}\setminus C$
  			\EndIf
            \EndFor
  			\EndFor
  			\State \Return $\mathcal{C}$
  			\EndProcedure
  		  \end{algorithmic}
  	\end{algorithm}
  \end{center}

\begin{proposition}\label{PROPOSITION: REACHABILITY}
For any stationary control policy $\mu$ and initial state $s$, the minimum probability over all stationary adversary policies of satisfying the LTL formula is equal to the minimum probability over all stationary policies of reaching $\mathcal{E}$, i.e., given any stationary policy $\mu$, we have
\begin{equation}
\min_{\tau}Pr_{\mathcal{G}}^{\mu\tau}(\phi|s)=\min_{\tau}Pr_{\mathcal{G}}^{\mu\tau}(\mbox{reach }\mathcal{E}|s),
\end{equation}
where $Pr_{\mathcal{G}}^{\mu\tau}(\text{reach }\mathcal{E})$ is the probability of reaching $\mathcal{E}$ under policies $\mu$ and $\tau$.
\end{proposition}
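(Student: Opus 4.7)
The strategy is to transport the classical probabilistic model-checking reduction --- satisfying a Rabin acceptance condition on an MDP is equivalent to reaching an accepting maximal end component --- into the two-player stochastic game setting, where the adversary has been absorbed into the GAMEC definition through Algorithm \ref{algo:GAMEC}. The plan is therefore to prove two inequalities, $\min_\tau Pr_\mathcal{G}^{\mu\tau}(\phi|s) \le \min_\tau Pr_\mathcal{G}^{\mu\tau}(\mbox{reach }\mathcal{E}|s)$ and the reverse, by examining the long-run behavior of the Markov chain $M_{\mu,\tau}$ induced by a pair of stationary policies.

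For the ``$\le$'' direction I would fix arbitrary $\mu,\tau$ and invoke the standard fact that almost every sample path of $M_{\mu,\tau}$ is eventually absorbed in one of its BSCCs. Since the Rabin acceptance condition is a tail event, a trajectory satisfies $\phi$ iff the absorbing BSCC $B$ intersects some $K_\mathcal{G}(z)$ while avoiding the corresponding $L_\mathcal{G}(z)$. I would then argue that every such accepting BSCC is contained in some GAMEC produced by Algorithm \ref{algo:GAMEC}. To see this, $B$ paired with the controller actions that $\mu$ plays with positive probability inside $B$ yields a sub-SG in the sense of Definition \ref{def: subSG} whose underlying digraph is strongly connected; because $B$ is closed under the $(\mu,\tau)$-dynamics and contains some $K_\mathcal{G}(z)$ while avoiding $L_\mathcal{G}(z)$, the pruning loop of Algorithm \ref{algo:GAMEC} places $B$ (or its maximal extension) inside one of the returned $C_h$. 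The pathwise implication $\phi \Rightarrow \mbox{reach }\mathcal{E}$ is then preserved by $\min_\tau$.

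For the ``$\ge$'' direction I would establish that, once the trajectory enters $\mathcal{E}$, the Rabin condition is satisfied with probability one against any adversary. Fix a GAMEC $(C_h,D_h)$: by Definition \ref{def: GAMEC} and the enabling property, any action drawn from $D_h$ keeps the system inside $C_h$ regardless of $u_A$, and by the strong connectedness of $G_{(C_h,D_h)}$ together with the stochasticity of the restricted chain, $K_\mathcal{G}(z)\subseteq C_h$ is visited infinitely often almost surely while $L_\mathcal{G}(z)\cap C_h=\emptyset$ is never touched. Combined with the first direction, this would yield the equality.

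The most delicate step --- and the main obstacle I foresee --- is that the ``$\ge$'' argument presumes $\mu$ commits to actions in $D_h$ once inside $C_h$, which is not automatic for a completely arbitrary stationary policy. I would resolve this by replacing $\mu$ inside each $C_h$ by a policy that selects uniformly at random from $D_h$: such a modification does not alter the first-hitting distribution into $\mathcal{E}$, hence leaves $\min_\tau Pr_\mathcal{G}^{\mu\tau}(\mbox{reach }\mathcal{E}|s)$ unchanged, while raising $\min_\tau Pr_\mathcal{G}^{\mu\tau}(\phi|s)$ to coincide with it; since the proposition is used downstream on the leader's optimal policy, one may restrict to $\mu$ already of this form without loss of generality. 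The remaining bookkeeping --- aligning BSCCs of $M_{\mu,\tau}$ with end components in the game sense and checking maximality against Algorithm \ref{algo:GAMEC}'s output across all Rabin pairs $(L_\mathcal{G}(z),K_\mathcal{G}(z))$ --- is routine but should be spelled out per pair.
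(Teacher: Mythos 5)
There is a genuine gap in your ``$\le$'' direction. You claim that for an \emph{arbitrary} pair $(\mu,\tau)$, every accepting BSCC $B$ of the induced Markov chain is contained in some GAMEC, so that satisfaction of $\phi$ implies reaching $\mathcal{E}$ pathwise, and you then pass the inequality through $\min_\tau$. But membership in a GAMEC (Definition \ref{def: GAMEC} via Definition \ref{def: subSG} and the pruning in Algorithm \ref{algo:GAMEC}) requires the enabled controller actions to keep the play inside $C$ for \emph{every} adversary action $u_A\in U_A(s)$, whereas closure of $B$ under the $(\mu,\tau)$-dynamics only gives closure under the actions in the support of $\tau$. A cooperative (non-minimizing) adversary can therefore sustain an accepting BSCC that is disjoint from $\mathcal{E}$ --- e.g.\ the controller's action keeps the play in $B$ under $u_A^1$ but escapes to a rejecting sink under $u_A^2$, and $\tau$ happens to play only $u_A^1$. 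For such a $\tau$ one has $Pr_{\mathcal{G}}^{\mu\tau}(\phi|s)=1$ while $Pr_{\mathcal{G}}^{\mu\tau}(\mbox{reach }\mathcal{E}|s)$ may be $0$, so the pathwise implication you rely on is false and the quantifier-exchange argument collapses. The paper closes exactly this hole by a different mechanism: it argues by contradiction on the \emph{minimizing} $\tau$, observing that any accepting BSCC disjoint from $\mathcal{E}$ must contain a state $s$ from which some deviation $\hat\tau$ forces $Pr_{\mathcal{G}}^{\mu\hat\tau}(\phi|s)<1$ (precisely because $s\notin\mathcal{E}$), and then splices $\hat\tau$ into $\tau$ region by region, iterating until either the adversary's payoff strictly improves (contradicting minimality of $\tau$) or the definition of $\hat\tau$ is contradicted. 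The essential missing idea in your write-up is this use of the adversary's optimality; without it, the containment of accepting BSCCs in $\mathcal{E}$ simply does not hold.

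Your ``$\ge$'' direction is in better shape: the observation that an arbitrary stationary $\mu$ need not play $D_h$-enabled actions inside a GAMEC is a real issue (the proposition as literally stated is only meaningful for policies that behave properly inside $\mathcal{E}$), and your resolution --- redefining $\mu$ on $\mathcal{E}$ to randomize over $D_h$, which leaves the first-hitting distribution of $\mathcal{E}$ unchanged --- matches what the paper does implicitly when it says reaching $\mathcal{E}$ yields satisfaction ``for a maximizing policy $\mu$'' and when it later prescribes the control policy on $\mathcal{E}$ separately. So the second half of your argument is acceptable, but the first half needs to be replaced by (or supplemented with) an argument that exploits minimality of the adversary's response.
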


\begin{proof}
By Definition of $\mathcal{E}$, if the system reaches $\mathcal{E}$, then $\phi$ is satisfied for a maximizing policy $\mu$. Thus $\min_{\tau}{Pr_{\mathcal{G}}^{\mu\tau}(\mbox{reach } \mathcal{E})} = \min_{\tau}{Pr_{\mathcal{G}}^{\mu\tau}(\phi)}.$

Suppose that for some control policy $\mu$ and initial state $s_{0}$,
\begin{equation}
\label{eq:prop1-1}
\min_{\tau}{Pr_{\mathcal{G}}^{\mu\tau}(\phi|s_{0})} > \min_{\tau}{Pr_{\mathcal{G}}^{\mu\tau}(\mbox{reach } \mathcal{E}|s_{0})},
\end{equation}
and let $\tau$ be a minimizing stationary policy for the adversary.  The policies $\mu$ and $\tau$ induce an MC on the state space. By model checking algorithms on MC \cite{baier2008principles}, the probability of satisfying $\phi$ from $s_{0}$ is equal to the probability of reaching a bottom strongly connected component (BSCC) that satisfies $\phi$. By assumption there exists a BSCC, denoted $SCC_{0}$, that is reachable from $s_{0}$, disjoint from $\mathcal{E}$, and yet satisfies $Pr_{\mathcal{G}}^{\mu\tau}(\phi | s) = 1$ for all $s \in S_{0}$ (if this were not the case, then (\ref{eq:prop1-1}) would not hold). 

Choose a state $s \in SCC_{0}$. Since $s \notin \mathcal{E}$, there exists a policy $\hat{\tau}$ such that $Pr_{\mathcal{P}}^{\mu\hat{\tau}}(\phi | s) < 1$. Create a new adversary policy $\tau_{1}$ as $\tau_{1}(s^{\prime}) = \hat{\tau}(s^{\prime})$ for all $ s^{\prime} \in SCC_{0}$ and
$\tau_{1}(s^{\prime}) =\tau(s^{\prime})$ otherwise. This policy induces a new MC on the state space. Furthermore, since only the outgoing transitions from $SCC_{0}$ are affected, the success probabilities of all sample paths that do not reach $SCC_{0}$ are unchanged.

If there exists any state $s^{\prime}$ that is reachable from $s$ in the new chain with $Pr_{\mathcal{G}}^{\mu\tau_{1}}(\phi|s^{\prime}) < 1$, then the policy $\tau_{1}$ strictly reduces the probability of satisfying $\phi$, thus contradicting the assumption that $\tau$ is a minimizing policy. Otherwise, let $SCC_{1}$ denote the set of states that are reachable from $s$ under $\mu$ and $\tau_{1}$ and are disjoint from $\mathcal{E}$ (this set must be non-empty; otherwise, the policy $\hat{\tau}$ would lead to $Pr_{\mathcal{G}}^{\mu\hat{\tau}}(\phi|s) = 1$, a contradiction). Construct a new policy $\tau_{2}$ by $\tau_{2}(s^{\prime}) = \hat{\tau}(s^{\prime})$ if $s^{\prime} \in S_{1}$ and $\tau_{2}(s^{\prime}) = \tau(s^{\prime})$ otherwise. Proceeding inductively, we derive a sequence of policies $\tau_{k}$ that satisfy $Pr_{\mathcal{G}}^{\mu\tau_{k}}(\phi) \leq Pr_{\mathcal{G}}^{\mu\tau}(\phi)$. This process terminates when either $Pr_{\mathcal{G}}^{\mu\tau_{k}}(\phi|s_{0}) < Pr_{\mathcal{G}}^{\mu\tau}(\phi|s_{0})$, contradicting the minimality of $\tau$, or when $Pr_{\mathcal{G}}^{\mu\tau_{k}}(\phi|s^{\prime\prime}) = Pr_{\mathcal{G}}^{\mu\hat{\tau}}(\phi|s^{\prime\prime})$ for all $s^{\prime\prime}$ that are reachable from $s$ under $\hat{\tau}$. The latter case, however, implies that $Pr_{\mathcal{G}}^{\mu\hat{\tau}}(\phi|s)) = 1$, contradicting the definition of $\hat{\tau}$. 
\end{proof}

Proposition \ref{PROPOSITION: REACHABILITY} implies that the problem of maximizing the worst-case success probability can be mapped to a reachability problem on the product SG $\mathcal{G}$, where $\mathcal{G}$ is modified following Algorithm \ref{algo:redefineSG}. A dummy state $dest$ is added into the state space of $S_\mathcal{G}$. All transitions starting from a state in GAMECs are directed to state $dest$ with probability one regardless of the actions taken by the adversary. The transition probabilities and action spaces of all other nodes are unchanged. We observe that the reachability probability remains unchanged after applying Algorithm \ref{algo:redefineSG}. Hence the satisfaction probability remains unchanged. Moreover, the one-to-one correspondence of control policy still holds for states outside $\mathcal{E}$. Therefore, Problem \ref{problem: max satisfaction probability} is then equivalent to
\begin{equation}
\label{eq:equivalent_reachability}
\max_{\mu}{\min_{\tau}{Pr_{\mathcal{G}}^{\mu\tau}(\mbox{reach } {dest})}}
\end{equation}
Then, the solution to (\ref{eq: LTL maximization}) can be obtained from the solution to (\ref{eq:equivalent_reachability}) by following the optimal policy $\mu^{\ast}$ for (\ref{eq:equivalent_reachability}) at all states not in $\mathcal{E}$. The control policy for states in $\mathcal{E}$ can be any probability distribution over the set of enabled actions in each GAMEC.

 \begin{center}
  	\begin{algorithm}[!htp]
  		\caption{Modifying product SG $\mathcal{G}$.}
  		\label{algo:redefineSG}
  		\begin{algorithmic}[1]
  			\Procedure{Construct\_SG}{$\mathcal{G}$, $\mathcal{C}$}
  			\State \textbf{Input}: Product SG $\mathcal{G}$, the set of GAMECs $\mathcal{C}$
  			\State \textbf{Output:} Modified product SG $\mathcal{G}$
  			\State $S_\mathcal{G}:=S_\mathcal{G}\cup\{dest\}$,$U_C(s):=U_C(s)\cup\{d\},~\forall s\in S_\mathcal{G}$
  			\State $Pr_\mathcal{G}(s,d,u_A,dest)=1$ for all $s\in \mathcal{E}\cup\{dest\}$ and $u_A\in U_A(s)$
  			\EndProcedure
  		\end{algorithmic}
  	\end{algorithm}
  \end{center}

Due to Proposition \ref{PROPOSITION: REACHABILITY}, in the following we focus on solving the problem \eqref{eq:equivalent_reachability}. Our approach for solving (\ref{eq:equivalent_reachability}) is to first compute a value vector $v \in \mathbb{R}^{|S_\mathcal{G}|}$, where $v(s) = \max_{\mu}{\min_{\tau}{Pr_{\mathcal{G}}^{\mu\tau}(\mbox{reach $dest$} | s)}}.$ By Lemma \ref{lemma:Stackelberg-Shapley}, the optimal policy can then be obtained from $v$ by choosing the distribution $\mu$ that solves the optimization problem of \eqref{eq:Stackelberg-Shapley} at each state $s$. Algorithm \ref{algo:reachability} gives a value iteration based algorithm for computing $v$. The idea of the algorithm is to initialize $v$ to be zero except on states in $\mathcal{E}$, and then greedily update $v(s)$ at each iteration by computing the optimal Stackelberg policy at each state. The algorithm terminates when a stationary $v$ is reached.

  \begin{center}
  	\begin{algorithm}[!htp]
  		\caption{Algorithm for a control strategy that maximizes the probability of satisfying $\phi$.}
  		\label{algo:reachability}
  		\begin{algorithmic}[1]
  			\Procedure{Max\_Reachability}{$\mathcal{G}$, $\mathcal{C}$}
  			\State \textbf{Input}: product SG $\mathcal{G}$, the set of GAMECs $\mathcal{C}$
  			\State \textbf{Output:} Vector $v \in \mathbb{R}^{|S_\mathcal{G}|}$, where $v(s) = \max{\min{Pr_{\mathcal{G}}^{\mu\tau}(\mbox{reach } dest | s_{0} = s)}}$
  			\State \textbf{Initialization:} $v^{0} \leftarrow 0$, $v^{1}(s) \leftarrow 1$ for $s \in \mathcal{E}$, $v^{1}(s) \leftarrow 0$ otherwise, $k \leftarrow 0$
  			\While{$\max{\{|v^{k+1}(s)-v^{k}(s)| : s \in S_\mathcal{G}\}} > \delta$}
  			\State $k \leftarrow k+1$
  			\For{$s \notin \mathcal{E}$}
  			\State Compute $v$ as $v^{k+1}(s) \leftarrow\allowbreak
  			\max_\mu\min_{\tau}\allowbreak\bigg\{\sum_{s^\prime}\sum_{u_{C} \in U_{C}(s)}\allowbreak\sum_{u_{A} \in U_{A}(s)}v(s^{\prime})\allowbreak
  			\mu(s,u_{C})\allowbreak\tau(s,u_{A})\allowbreak Pr_{\mathcal{G}}(s,u_{C},u_{A},s^{\prime})\bigg\}$
  			\EndFor
  			\EndWhile
  			\State \Return $v$
  			\EndProcedure        	
  		\end{algorithmic}
  	\end{algorithm}
  \end{center}

The following theorem shows that Algorithm \ref{algo:reachability} guarantees convergence to a Stackelberg equilibrium.

\begin{theorem}
\label{theorem:reachability_algo}
There exists $v^{\infty}$ such that for any $\epsilon > 0$, there exists $\delta$ and $K$ such that $||v^{k}-v^{\infty}||_{\infty} < \epsilon$ for $k > K$. Furthermore, $v^{\infty}$ satisfies the conditions of $v$ in Lemma \ref{lemma:Stackelberg-Shapley}.
\end{theorem}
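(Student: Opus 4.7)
The plan is to view Algorithm \ref{algo:reachability} as the fixed-point iteration $v^{k+1} = Tv^k$, where $T$ is the max-min Bellman operator appearing in the proof of Lemma \ref{lemma:Stackelberg-Shapley}, extended to pin $(Tv)(s)=1$ on $s\in\mathcal{E}$. Convergence will then follow from monotonicity and uniform boundedness of the iterates, and the identification of the limit $v^{\infty}$ as the Stackelberg equilibrium value will follow from the uniqueness clause of Lemma \ref{lemma:Stackelberg-Shapley}.

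First I would establish that $T$ is monotone: if $v\leq v'$ componentwise, then for every pair $(\mu,\tau)$ the inner quantity $\sum_{u_C,u_A,s'}\mu(s,u_C)\tau(s,u_A)\,Pr_\mathcal{G}(s,u_C,u_A,s')v(s')$ is nondecreasing in $v$ because all coefficients are nonnegative, and taking $\max_\mu\min_\tau$ preserves this inequality. Next, the sequence $\{v^k\}$ is monotonically nondecreasing and uniformly bounded in $[0,1]$: by construction $v^0\equiv 0\leq v^1$, and by induction together with monotonicity, $v^k\leq v^{k+1}=Tv^k\leq Tv^{k+1}=v^{k+2}$; boundedness follows because each entry of $v^{k+1}$ is a convex combination of quantities lying in $[0,1]$. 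Since $S_\mathcal{G}$ is finite, this bounded monotone sequence converges entrywise, hence in the $\ell_\infty$ norm, to some $v^{\infty}$. Passing to the limit in $v^{k+1}=Tv^k$ using continuity of $T$ yields $v^{\infty}=Tv^{\infty}$, which is exactly (\ref{eq:Stackelberg-Shapley}), and the uniqueness clause of Lemma \ref{lemma:Stackelberg-Shapley} identifies $v^{\infty}$ with the Stackelberg equilibrium value. For the quantitative claim, given $\epsilon>0$, convergence supplies a $K$ with $\|v^k-v^{\infty}\|_\infty<\epsilon$ for $k>K$; since $\|v^{k+1}-v^k\|_\infty\to 0$, choosing $\delta$ sufficiently small ensures the while-loop does not terminate before iteration $K$.

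The main obstacle will be verifying continuity of $T$, equivalently that the max-min value is a continuous function of $v$. Since mixed strategies live on compact probability simplices and the objective is bilinear (hence jointly continuous) in $(\mu,\tau)$ for every fixed $v$, Berge's maximum theorem applies to the inner $\min_\tau$ and to the outer $\max_\mu$ in turn, yielding continuity of $(Tv)(s)$ in $v$. A related subtlety is that $T$ is not a contraction in general, since there is no discount factor, so a Banach fixed-point argument is unavailable; the monotone-convergence route above sidesteps this issue by exploiting the specific initialization $v^1\geq v^0$ and the absorbing character of the dummy state $dest$ introduced by Algorithm \ref{algo:redefineSG}.
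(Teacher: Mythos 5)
Your proposal is correct and follows essentially the same route as the paper: both establish that the iterates form a bounded, monotone nondecreasing sequence (the paper via induction through the optimal policy $\mu^k$, you via monotonicity of the operator $T$, which is the same content) and conclude convergence by the monotone convergence theorem on a finite state space. The only substantive difference is in identifying the limit: the paper uses the Cauchy property to argue $v^{\infty}$ is within $\epsilon$ of satisfying (\ref{eq:Stackelberg-Shapley}) for every $\epsilon>0$, whereas you pass to the limit in $v^{k+1}=Tv^{k}$ after justifying continuity of $T$ via Berge's maximum theorem and then invoke the uniqueness clause of Lemma \ref{lemma:Stackelberg-Shapley} --- a slightly more explicit treatment of a step the paper leaves implicit, but not a different proof.
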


\begin{proof}
We first show that, for each $s$, the sequence $v^{k}(s) : k=1,2,\ldots,$ is bounded and monotone. Boundedness follows from the fact that, at each iteration, $v^{k}(s)$ is a convex combination of the states of its neighbors, which are bounded above by $1$. To show monotonicity, we induct on $k$. Note that $v^{1}(s) \geq v^{0}(s)$ and $v^{2}(s) \geq v^{1}(s)$ since $v^{1}(s) = 0$ for $s \notin \mathcal{E}$ and $v^{k}(s) \equiv 1$ for $s \in \mathcal{E}$.

Let $\mu^{k}$ denote the optimal control policy at step $k$. We have
\begin{eqnarray}
\label{eq:reach-1}
v^{k+1}(s) &\geq& \min_\tau\sum_{u_{C} \in U_{C}(s)}\sum_{u_{A} \in U_{A}(s)}\sum_{s^{\prime} \in S}v^{k}(s^{\prime})\label{eq:reach-2}\\
&&\cdot\mu^{k}(s,u_{C})\tau(s,u_{A})Pr_\mathcal{G}(s,u_{C},u_{A},s^{\prime}) \nonumber\\
&\geq & \min_\tau\sum_{u_{C} \in U_{C}(s)}\sum_{u_{A} \in U_{A}(s)}\sum_{s^{\prime} \in S}v^{k-1}(s^{\prime})\\
\nonumber
&& \cdot \mu^{k}(s,u_{C})\tau(s,u_{A})Pr_\mathcal{G}(s,u_{C},u_{A},s^{\prime}) \\
\label{eq:reach-3}
&=& v^{k}(s)
\end{eqnarray}
Eq. (\ref{eq:reach-1}) follows because the value of $v^{k+1}(s)$, which corresponds to the maximizing policy, dominates the value achieved by the particular policy $\mu_{s}^{k}$. Eq. (\ref{eq:reach-2}) holds by induction, since $v^{k}(s^{\prime}) \geq v^{k-1}(s^{\prime})$ for all $s^{\prime}$. Finally, (\ref{eq:reach-3}) holds by construction of $\mu_{s}^{k}$. Hence $v^{k}(s)$ is monotone in $k$.

We therefore have that $v^{k}(s)$ is a bounded monotone sequence, and hence converges by the monotone convergence theorem. Let $v^{\infty}$ denote the vector of limit points, so that we can select $\delta$ sufficiently small (to prevent the algorithm from terminating before convergence) and $K$ large in order to satisfy $||v^{k}-v^{\infty}||_{\infty} < \epsilon$.

We now show that $v^{\infty}$ is a Stackelberg equilibrium. Since $v^{k}(s)$ converges, it is a Cauchy sequence and thus for any $\epsilon > 0$, there exists $K$ such that $k > K$ implies that $|v^{k}(s)-v^{k+1}(s)| < \epsilon$. By construction, this is equivalent to 
\begin{multline*}
\Big| v^{k}(s) - \max_{\mu}\min_{\tau}\sum_{u_{C} \in U_{C}(s)}\sum_{u_{A} \in U_{A}(s)}\sum_{s^{\prime} \in S}\big[v^{k-1}(s^{\prime})\mu(s,u_{C})\\
\tau(s,u_{A})Pr_\mathcal{G}(s,u_{C},u_{A},s^{\prime})\big]\Big| < \epsilon,
\end{multline*}
and hence $v^{\infty}$ is within $\epsilon$ of a Stackelberg equilibrium for every $\epsilon > 0$.
\end{proof}

While this approach guarantees asymptotic convergence to a Stackelberg equilibrium, there is no guarantee on the rate of convergence. By modifying Line 8 of the algorithm so that $v^{k+1}(s)$ is updated if
\begin{multline}\label{eq:epsilon-update}
\max_{\mu}\min_{\tau}\sum_{u_{C} \in U_{C}(s)}\sum_{u_{A} \in U_{A}(s)}\sum_{s^{\prime} \in S}\Big[v(s^{\prime})\mu(s,u_{C})
\tau(s,u_{A})\\Pr(s,u_{C},u_{A},s^{\prime})\Big] > (1+\epsilon)v^{k}(s)
\end{multline}
 and is constant otherwise, we derive the following result on the termination time.

\begin{proposition}
\label{prop:epsilon-complexity}
The $\epsilon$-relaxation of (\ref{eq:epsilon-update}) converges to a value of $v$ satisfying $\max\{|v^{k+1}(s)-v^{k}(s)| : s \in S_{\mathcal{G}}\} < \epsilon$  within $n\max_{s}{\left\{{\log{\left(\frac{1}{v^{0}(s)}\right)}}/{\log{(1+\epsilon)}}\right\}}$ iterations, where $v^{0}(s)$ is the smallest positive value of $v^{k}(s)$ for $k=0,1,\ldots$.
\end{proposition}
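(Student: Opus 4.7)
The plan is to bound the number of iterations by bounding the total number of state updates that the $\epsilon$-relaxed rule can trigger before stalling. First I would record two invariants. Because $v^{k}(s)$ represents a worst-case reachability probability, one has $0 \le v^{k}(s) \le 1$ for every $s$ and every $k$. Second, the $\epsilon$-relaxation preserves monotonicity: if the rule \eqref{eq:epsilon-update} does not fire then $v^{k+1}(s) = v^{k}(s)$, while if it does fire then $v^{k+1}(s) > (1+\epsilon)v^{k}(s) \ge v^{k}(s)$, so $v^{k}(s)$ is nondecreasing in $k$ exactly as in the proof of Theorem~\ref{theorem:reachability_algo}.

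Next I would count updates per state. Fix a state $s$, let $k_{0}$ be the first iteration at which $v^{k_{0}}(s) > 0$, and let $v^{0}(s)$ denote, as in the statement, the smallest positive value that the sequence $v^{k}(s)$ ever attains, so that $v^{k_{0}}(s) \ge v^{0}(s)$. Every subsequent firing of \eqref{eq:epsilon-update} at state $s$ multiplies $v^{k}(s)$ by a factor of at least $(1+\epsilon)$, so after $m$ such firings one has $v^{k}(s) \ge (1+\epsilon)^{m}\, v^{0}(s)$. Combining this with $v^{k}(s) \le 1$ forces $m \le \log(1/v^{0}(s))/\log(1+\epsilon)$. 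Summing this per-state cap over the $n = |S_{\mathcal{G}}|$ states yields at most $\sum_{s}\log(1/v^{0}(s))/\log(1+\epsilon) \le n \max_{s}\left\{\log(1/v^{0}(s))/\log(1+\epsilon)\right\}$ firings over the entire run.

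Finally I would translate this bound on firings into a bound on iterations and verify the termination criterion. If at some iteration no state satisfies \eqref{eq:epsilon-update}, then $v^{k+1}(s) = v^{k}(s)$ for every $s$, so $\max_{s}|v^{k+1}(s)-v^{k}(s)| = 0 < \epsilon$ and the claimed condition is met; otherwise the iteration strictly increments the update counter of at least one state. Hence the number of iterations until termination is at most the total number of firings, which is bounded by $n \max_{s}\left\{\log(1/v^{0}(s))/\log(1+\epsilon)\right\}$. The main subtle point, in my view, is the convention that $v^{0}(s)$ refers not to the algorithm's initialization (which equals $0$ for $s \notin \mathcal{E}$) but to the smallest positive value that $v^{k}(s)$ ever attains; once that convention is pinned down the geometric-growth counting argument is routine.
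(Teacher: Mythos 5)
Your proposal is correct and follows essentially the same argument as the paper: each firing of the relaxed update rule multiplies $v^{k}(s)$ by at least $(1+\epsilon)$, so each state can be updated at most $\log(1/v^{0}(s))/\log(1+\epsilon)$ times before exceeding the bound $v^{k}(s)\leq 1$, and since every non-terminating iteration fires at least one state, the total iteration count is bounded by $n\max_{s}\left\{\log(1/v^{0}(s))/\log(1+\epsilon)\right\}$. Your additional care in pinning down the convention for $v^{0}(s)$ and in checking the termination criterion explicitly only makes the same argument slightly more complete than the paper's version.
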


\begin{proof}
After $N$ updates, we have that $v^{N}(s) \geq (1+\epsilon)^{N}v^{0}(s)$. Hence for each $s$, $v(s)$ will be incremented at most $\max_{s}{\left\{{\log{\left(\frac{1}{v^{0}(s)}\right)}}/{\log{(1+\epsilon)}}\right\}}$ times. Furthermore, we have that at least one $v(s)$ must be updated at each iteration, thus giving the desired upper bound on the number of iterations. By definition of (\ref{eq:epsilon-update}), the set that is returned satisfies $|v^{k+1}(s) - v^{k}(s)| < \epsilon v^{k}(s) < \epsilon.$
\end{proof}


\section{Problem Formulation-Minimizing Invariant  Constraint Violation}\label{sec:formulation2}

In this section, we focus on a subclass of specifications of the form $\phi=\phi_1\land \psi$, where $\phi_1$ is an arbitrary LTL formula and $\psi$ is an invariant constraint. An invariant constraint requires the system to always satisfy some property. The general LTL formula $\phi_1$ can be used to model any arbitrary properties such as liveness $\phi_1=\Box\Diamond\pi$, while the invariant property can be used to model collision avoidance requirements $\psi=\Box\neg\text{obstacle}$. Given an LTL specification on the dynamical system \eqref{eq: dynamics}, it might be impossible for the system to satisfy the specification due to the presence of the adversary, i.e., $\max_{\mu}\min_{\tau}Pr_{\mathcal{SG}}^{\mu\tau}(\phi)=0$. Thus, we relax the specification by allowing violations on invariant constraint $\psi$. To minimize the impact of invariant constraint violations, we investigate the problem of minimizing the invariant constraint violation rate in this section. In particular, given a specification $\phi=\phi_1\land\psi$, the objective is to compute a control policy that minimizes the expected number of violations of $\psi$ per cycle over all the stationary policies that maximizes the probability of satisfying $\phi_1$. We say that every visit to a state that satisfies $\phi_1$ completes a cycle. We still focus on the SG generated from \eqref{eq: dynamics}.

In the following, we first formulate the problem. Motivated by the solution idea of ACPC problem, we solve the problem by generalizing the ACPS problem in \cite{bertsekas1995dynamic} and establishing a connection between the problem we formulated and the generalized ACPS problem. Finally, we present the optimality conditions of the problem of interest, and propose an efficient algorithm to solve the problem.

\subsection{Problem Statement}

In the following, we focus on how to generate a control policy that minimizes the rate at which $\psi$ is violated while guaranteeing that the probability of satisfying $\phi_1$ is maximized. The problem is stated as follows:

\begin{problem}\label{problem: P2}
Compute a secure control policy $\mu$ that minimizes the violation rate of $\psi$, i.e., the expected number of violations of $\psi$ per cycle, while maximizing the probability that $\phi_1$ is satisfied under any adversary policy $\tau$.
\end{problem}

To investigate the problem above, we assign a positive cost $\alpha$ to every transition initiated from a state $s$ if $s\not\models\psi$. If state $s\models\psi$, we let $g(s)=0$ for all $u_C$ and $u_A$. Thus we have for all $u_C$ and $u_A$
\begin{equation}\label{eq: transition cost}
g(s)=\begin{cases}
\alpha &\mbox{if }s\not\models \psi\\
0 &\mbox{if }s\models \psi.
\end{cases}
\end{equation}

By Proposition \ref{PROPOSITION: REACHABILITY}, we have that two consecutive visits to $\mathcal{E}$ complete a \emph{cycle}. Based on the transition cost defined in \eqref{eq: transition cost}, Problem \ref{problem: P2} can be rewritten as follows.
\begin{problem}\label{problem: ACPC}
Given a stochastic game $\mathcal{SG}$ and an LTL formula $\phi$ in the form of $\phi=\phi_1\land\psi$, obtain an optimal control policy $\mu$ that maximizes the probability of satisfying $\phi_1$ while minimizing the average cost per cycle due to violating $\psi$ which is defined as
\begin{equation}\label{eq: ACPC}
J_{\mathcal{SG}}^{\mu\tau}=\underset{N\rightarrow\infty}{\lim\sup}~\mathbb{E}\bigg\{\frac{\sum_{k=0}^Ng(s_k)}{I(\beta,N)}\mathrel{\Big|}
                \eta_\beta\models \phi_1\bigg\}.
\end{equation}
\end{problem}

Since $\phi_1$ is required to be satisfied, similar to our analysis in Section \ref{sec:formulation1}, we first construct a product SG $\mathcal{G}$ using SG $\mathcal{SG}$ and the DRA converted from specification $\phi_1$. Then we have the following observations. First, the one-to-one correspondence relationships between the control policies, paths, and associated expected cost due to violating $\psi$ on $\mathcal{SG}$ and $\mathcal{G}$ hold. Furthermore, we observe that if there exists a control policy such that the specification $\phi$ can be satisfied, it is the optimal solution to Problem \ref{problem: ACPC} with $J_{\mathcal{SG}}^{\mu\tau}=0$. Finally, by our analysis in Section \ref{sec:formulation1}, specification $\phi_1$ is guaranteed to be satisfied if there exists a control policy that can reach the set of accepting states $\mathcal{E}$. These observations provide us the advantage to analyze Problem \ref{problem: ACPC} on the product SG $\mathcal{G}$ constructed using SG $\mathcal{SG}$ and DRA constructed using $\phi_1$. Hence, in the following, we analyze Problem \ref{problem: ACPC} on the product SG $\mathcal{G}$. When the context is clear, we use $s$ to refer to state $(s,q)\in S_\mathcal{G}$. Without loss of generality, we assume that $\mathcal{E}=\{1,2,\cdots,l\}$, i.e., states $\{l+1,\cdots,n\}\cap \mathcal{E}=\emptyset$.

\subsection{Computing the Optimal Control Policy}

Due to the presence of an adversary, the results presented in Proposition \ref{prop: ACPS} are not applicable. In the following we first generalize the ACPS problem discussed in \cite{bertsekas1995dynamic}, which focused on systems without adversaries. Then we characterize the optimality conditions for Problem \ref{problem: ACPC} by connecting it with the generalized ACPS problem.

\textbf{Generalized ACPS problem.} The presence of adversary is not considered in the ACPS problem considered in \cite{bertsekas1995dynamic}. Thus we need to formulate the ACPS problem with the presence of adversary and we denote it as the generalized ACPS problem. The objective of generalized ACPS problem is to minimize 
\begin{equation}
J_{\mu\tau}(s) = \underset{N\rightarrow\infty}{\limsup}~\frac{1}{N}\mathbb{E}\left\{\sum_{n = 0}^N g(s)\mid s_0 = s\right\}
\end{equation}
over all stationary control considering the adverary plays some strategy $\tau$ against the controller.

\textbf{Optimality conditions for generalized ACPS problem}. Given any stationary policies $\mu$ and $\tau$, denote the induced transition probability matrix as $P^{\mu\tau}$ with $P^{\mu\tau}(s,s^\prime)=\sum_{u_C\in U_C(s)}\sum_{u_A\in U_A(s)}\mu(s,u_C)\tau(s,u_A) Pr_\mathcal{G}(s,u_C,u_A,s^\prime)$. Analogously, denote the expected transition cost starting from any state $s\in S_\mathcal{G}$ as $g^{\mu\tau}(s)=\sum_{u_C\in U_C(s)}\sum_{u_A\in U_A(s)}\mu(s,u_C)\tau(s,u_A)g(s)$. Similar to \cite{bertsekas1995dynamic}, a gain-bias pair is used to characterize the optimality condition. The gain-bias pair $(B^{\mu\tau},b^{\mu\tau})$ under stationary policies $\mu$ and $\tau$, where $B^{\mu\tau}$ is the average cost per stage and $b^{\mu\tau}$ is the differential or relative cost vector, satisfies the following proposition.
\begin{lemma}\label{LEMMA: ACPS GAIN BIAS}
Let $\mu$ and $\tau$ be proper stationary policies for a communicating SG, where a communicating SG is an SG whose underlying graph is strongly connected. Then there exists a constant $\zeta^{\mu\tau}$ such that
\begin{equation}
B^{\mu\tau}(s)=\zeta^{\mu\tau},~\forall s\in S_\mathcal{G}.
\end{equation}
Furthermore, the gain-bias pair $(B^{\mu\tau},b^{\mu\tau})$ satisfies
\begin{equation}
B^{\mu\tau}(s)+b^{\mu\tau}(s)=g^{\mu\tau}(s)+\sum_{k=1}^n
P^{\mu\tau}(s,k)b^{\mu\tau}(k)
\end{equation}
\end{lemma}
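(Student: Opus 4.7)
The plan is to reduce the statement to the classical Poisson-equation characterization of average cost for an irreducible finite Markov chain, exploiting the fact that once $\mu$ and $\tau$ are fixed and stationary, the stochastic game collapses to a single Markov chain on $S_\mathcal{G}$ with transition matrix $P^{\mu\tau}$ and one-step expected cost $g^{\mu\tau}$. With this reduction, both claims become statements about this induced chain, and I can borrow directly from standard dynamic-programming theory (e.g., the treatment in Bertsekas \cite{bertsekas1995dynamic} referenced earlier in the paper).

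First I would argue that the induced Markov chain is irreducible (or at least unichain with a single recurrent class that is reachable from every state). The SG is communicating, so its underlying digraph is strongly connected; moreover, $\mu$ and $\tau$ are proper, which in the paper's terminology means that with positive probability the system reaches the target set in finitely many steps. Combining these with the finiteness of $S_\mathcal{G}$, every state communicates with every other state under $P^{\mu\tau}$, so the chain is unichain. This is the step I would expect to demand the most care: the paper's definition of ``proper'' is phrased in terms of satisfying a specification rather than the more standard ``reaches a distinguished state with probability one,'' so I would have to translate between these notions and invoke communicatingness to rule out transient behavior that would destroy irreducibility.

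Next, standard results on finite irreducible Markov chains immediately give the first part of the lemma. The limit
\[
B^{\mu\tau}(s) = \lim_{N\to\infty}\frac{1}{N}\,\mathbb{E}\!\left\{\sum_{k=0}^{N} g^{\mu\tau}(s_k)\,\Big|\,s_0=s\right\}
\]
exists and equals $\zeta^{\mu\tau} := \sum_{s}\pi^{\mu\tau}(s)\,g^{\mu\tau}(s)$, where $\pi^{\mu\tau}$ is the unique stationary distribution of $P^{\mu\tau}$; crucially, this value is independent of the initial state $s$, yielding the first displayed equation.

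Finally, for the Poisson equation I would invoke the existence of a relative (bias) vector $b^{\mu\tau}$ solving
\[
\zeta^{\mu\tau}\mathbf{1} + b^{\mu\tau} = g^{\mu\tau} + P^{\mu\tau} b^{\mu\tau}.
\]
One concrete construction is $b^{\mu\tau}(s) = \sum_{k=0}^{\infty}\bigl[(P^{\mu\tau})^{k}(g^{\mu\tau}-\zeta^{\mu\tau}\mathbf{1})\bigr](s)$, where convergence is guaranteed because for a finite irreducible chain the deviations $g^{\mu\tau}-\zeta^{\mu\tau}\mathbf{1}$ lie in the range of $I - P^{\mu\tau}$ and the partial sums remain bounded; this is the standard fundamental-matrix argument. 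Substituting componentwise yields exactly the claimed identity. The non-routine part of the whole proof is cleanly showing irreducibility from the combination of ``communicating SG'' and ``proper policies''; once that is done, the rest is textbook Markov-chain theory.
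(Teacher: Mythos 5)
Your proposal is correct in substance but follows a genuinely different route from the paper's proof. The paper does not pass through the stationary distribution or the fundamental matrix at all: it fixes a recurrent reference state $s'$, defines $\xi(s)$ as the expected cumulative cost and $o(s)$ as the expected number of stages until the first visit to $s'$, sets $\zeta^{\mu\tau}=\xi(s')/o(s')$ (the renewal--reward ratio over one return cycle), and defines the bias directly as $b^{\mu\tau}(s)=\xi(s)-\zeta^{\mu\tau}o(s)$. Subtracting $\zeta^{\mu\tau}$ times the first-passage recursion for $o$ from the recursion for $\xi$ yields the Poisson equation immediately (the reference state can be re-included in the sum because $b^{\mu\tau}(s')=0$). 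This construction is elementary and explicit, and it sidesteps the one soft spot in your argument: for a periodic irreducible chain the partial sums of $\sum_{k}(P^{\mu\tau})^{k}(g^{\mu\tau}-\zeta^{\mu\tau}\mathbf{1})$ are bounded but need not converge, so your concrete formula for $b^{\mu\tau}$ must be replaced by its Ces\`{a}ro limit or by $(I-P^{\mu\tau}+\mathbf{1}(\pi^{\mu\tau})^{\top})^{-1}(g^{\mu\tau}-\zeta^{\mu\tau}\mathbf{1})$ --- the fundamental-matrix device you allude to --- for the step to be airtight. On the other hand, your route establishes the first claim ($B^{\mu\tau}(s)\equiv\zeta^{\mu\tau}$ with $\zeta^{\mu\tau}=\sum_{s}\pi^{\mu\tau}(s)g^{\mu\tau}(s)$) more explicitly than the paper, which derives the Poisson equation and leaves the constancy of the average cost implicit; and your discussion of why the induced chain is irreducible under the communicating-plus-proper hypotheses makes explicit something the paper glosses over by simply positing a recurrent state reachable from everywhere. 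Both arguments are, at bottom, the two standard derivations of the average-cost Poisson equation for a finite unichain Markov chain; the paper's first-passage version is the more self-contained of the two.
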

\begin{proof}
Suppose $s^\prime$ is a recurrent state under policies $\mu$ and $\tau$. Define $\xi(s)$ as the expected cost to reach $s^\prime$ for the first time from state $s$, and $o(s)$ as the expected number of stages to reach $s^\prime$ for the first time from $s$. Thus $\xi(s^\prime)$ and $o(s^\prime)$ can be interpretated as the expected cost and expected number of stages to return to $s^\prime$ for the first time from state $s^\prime$, respectively. Based on the definitions above, we have the following equations:
\begin{align}
\xi(s)&=g^{\mu\tau}(s)+\sum_{k\in S\setminus s^\prime}P^{\mu\tau}(s,k)\xi(k),~\forall s\in S_\mathcal{G},\label{eq: ACPS expected cost}\\
o(s)&=1+\sum_{k\in S_\mathcal{G}\setminus s^\prime}P^{\mu\tau}(s,k)o(k),~\forall s\in S_\mathcal{G}.\label{eq: ACPS expected number of stages}
\end{align}
Define $\zeta^{\mu\tau}=\xi({s^\prime})/o({s^\prime})$. Multiplying \eqref{eq: ACPS expected number of stages} by $\zeta^{\mu\tau}$ and subtracting the associated product from \eqref{eq: ACPS expected cost}, we have
\begin{multline}
\xi(s)-\zeta^{\mu\tau}o(s)=g^{\mu\tau}(s)-\zeta^{\mu\tau}\\
+\sum_{k\in S_\mathcal{G}\setminus s^\prime}P^{\mu\tau}(s,k)(\xi(k)-\zeta^{\mu\tau}o(k)),~\forall s\in S_\mathcal{G}.\label{eq: ACPS rewrite}
\end{multline}
Define a bias term
\begin{equation}\label{eq: ACPS bias def}
b^{\mu\tau}(s)=\xi(s)-\zeta^{\mu\tau}o(s), ~\forall s\in S_\mathcal{G}
\end{equation}
Using \eqref{eq: ACPS bias def}, \eqref{eq: ACPS rewrite} can be rewritten as
\begin{equation}\label{eq: optimal ACPS gain bias}
\zeta^{\mu\tau}+b^{\mu\tau}(s)=g^{\mu\tau}(s)+\sum_{k=1}^nP^{\mu\tau}(s,k)b^{\mu\tau}(k),~\forall s\in S_\mathcal{G}
\end{equation}
which completes our proof.
\end{proof}

The result presented above generalizes the one in \cite{bertsekas1995dynamic} in the sense that we consider the presence of adversary. The reason that we focus on communicating SG is that we will focus on the accepting states which are strongly connected. Based on Lemma \ref{LEMMA: ACPS GAIN BIAS}, we have the optimality conditions for generalized ACPS problem expressed using the gain-bias pair $(B,b)$:
\begin{align}
&B(s)=\min_{\mu}\max_{\tau}\sum_{u_{C} \in U_{C}(s)}\sum_{u_{A} \in U_{A}(s)}\sum_{s^\prime}\mu(s,u_{C})\tau(s,u_{A})\nonumber\\
&\quad\quad\quad\quad\quad \cdot Pr_\mathcal{G}(s,u_C,u_A,s^\prime)B(s^\prime)\label{eq: ACPS gain}\\
&B(s)+b(s)=\min_{\mu\in\mu^*}\max_{\tau\in\tau^*}\big[g^{\mu\tau}(s)+\sum_{u_{C} \in U_{C}(s)}\sum_{u_{A} \in U_{A}(s)}\sum_{s^\prime}\nonumber\\
&\quad\quad\quad\mu(s,u_{C})\tau(s,u_{A}) Pr_\mathcal{G}(s,u_C,u_A,s^\prime)b(s^\prime)\bigg]\label{eq: ACPS gain bias}
\end{align}
where $\mu^*$ and $\tau^*$ are the optimal policy sets obtained by solving \eqref{eq: ACPS gain}. Eq. \eqref{eq: ACPS gain} can be shown using the method presented in Lemma \ref{lemma:Stackelberg-Shapley}, and \eqref{eq: ACPS gain bias} is obtained directly from \eqref{eq: optimal ACPS gain bias}. Given the optimality conditions \eqref{eq: ACPS gain} and \eqref{eq: ACPS gain bias} for generalized ACPS problem, we can derive the optimality conditions for Problem \ref{problem: ACPC} by mapping Problem \ref{problem: ACPC} to generalized ACPS problem.

\textbf{Optimality conditions for Problem \ref{problem: ACPC}}. In the following, we establish the connection between the generalized ACPS problem and Problem \ref{problem: ACPC}. Given the connection, we then derive the optimality conditions for Problem \ref{problem: ACPC}. 

Denote the gain-bias pair of Problem \ref{problem: ACPC} on the product SG $\mathcal{G}$ as $(J_{\mathcal{G}},h_{\mathcal{G}})$, where $J_{\mathcal{G}},h_{\mathcal{G}}\in\mathbb{R}^n$. Denote the gain-bias pair under policies $\mu$ and $\tau$ as $(J_{\mathcal{G}}^{\mu\tau},h_{\mathcal{G}}^{\mu\tau})$, where
$J_{\mathcal{G}}^{\mu\tau}=[J_{\mathcal{G}}^{\mu\tau}(1),J_{\mathcal{G}}^{\mu\tau}(2),\cdots,J_{\mathcal{G}}^{\mu\tau}(n)]^T$ and $ h_{\mathcal{G}}^{\mu\tau}=[h_{\mathcal{G}}^{\mu\tau}(1),h_{\mathcal{G}}^{\mu\tau}(2),\cdots,h_{\mathcal{G}}^{\mu\tau}(n)]^T.$

We can express the transition probability matrix $P^{\mu\tau}$ induced by control and adversary policy $\mu$ and $\tau$ as $P^{\mu\tau}=P^{\mu\tau}_{\text{in}}+P^{\mu\tau}_{\text{out}}$, where
\begin{subequations}\label{eq: probability matrix in out pi}
\begin{align}
P^{\mu\tau}_{\text{in}}(s,s^\prime)&=\begin{cases}
P^{\mu\tau}(s,s^\prime)&\mbox{if}~s^\prime\in \mathcal{E}\\
0&\mbox{otherwise}
\end{cases}\\
P^{\mu\tau}_{\text{out}}(s,s^\prime)&=\begin{cases}
P^{\mu\tau}(s,s^\prime)&\mbox{if}~s^\prime\notin \mathcal{E}\\
0&\mbox{otherwise}
\end{cases}.
\end{align}
\end{subequations}

Denote the probability that we visit some accepting state $s^\prime\in \mathcal{E}$ from state $s$ under policies $\mu$ and $\tau$ as $\hat{P}^{\mu\tau}(s,s^\prime)$. Then we see that  $\hat{P}^{\mu\tau}(s,s^\prime)$ is calculated as
\begin{multline}\label{eq: reach pi probability}
\hat{P}^{\mu\tau}(s,s^\prime)=\sum_{u_C\in U_C(s)}\mu(s,u_C)\sum_{u_A\in U_A(s)}\tau(s,u_A)\\
Pr_\mathcal{G}(s,u_C,u_A,s^\prime)
+\sum_{u_C\in U_C(s)}\mu(s,u_C)\sum_{u_A\in U_A(s)}\tau(s,u_A)\\
\sum_{k=l+1}^nPr_\mathcal{G}(s,u_C,u_A,k)\hat{P}^{\mu\tau}(k,s^\prime).
\end{multline}
The intuition behind \eqref{eq: reach pi probability} is that the probability that $s^\prime$ is the first state to be visited consists of the following two parts. The first term in \eqref{eq: reach pi probability} describes the probability that next state is in $\mathcal{E}$. The second term in \eqref{eq: reach pi probability} models the probability that before reaching state $s^\prime\in \mathcal{E}$, the next visiting state is $k\notin \mathcal{E}$. Denote the transition probability matrix formed by $\hat{P}(s,s^\prime)$ as $\hat{P}^{\mu\tau}$. Since $P^{\mu\tau}_{\text{out}}$ is substochastic and transient, we have $I-P^{\mu\tau}_{\text{out}}$ is non-singular \cite{hogben2013handbook}, where $I$ is the identity matrix with proper dimension. Thus $I-P^{\mu\tau}_{\text{out}}$ is invertible. Then using \eqref{eq: probability matrix in out pi}, the transition probability matrix $\hat{P}^{\mu\tau}$ is represented as
\begin{equation}\label{eq: rewrite reach pi probability}
\hat{P}^{\mu\tau}=(I-P^{\mu\tau}_{\text{out}})^{-1}P^{\mu\tau}_{\text{in}}.
\end{equation}

Denote the expected invariant property violation cost incurred when visiting some accepting state $s^\prime\in \mathcal{E}$ from state $s$ under policies $\mu$ and $\tau$ as $\hat{g}(s)$. The expected cost $\hat{g}(s)$ is calculated as follows:
\begin{equation}\label{eq: reach pi cost}
\hat{g}(s)=g^{\mu\tau}(s)+\sum_{k=l+1}^nPr^{\mu\tau}(s,k)\hat{g}(k).
\end{equation}
Under policies $\mu$ and $\tau$, denote the expected cost vector formed by $\hat{g}(s)$ as $\hat{g}^{\mu\tau}$. Then using \eqref{eq: probability matrix in out pi}, the expected cost vector \eqref{eq: reach pi cost} can be rearranged as follows:
\begin{equation}\label{eq: rewrite reach pi cost}
\hat{g}^{\mu\tau}= P^{\mu\tau}_{\text{out}}\hat{g}^{\mu\tau} +  g^{\mu\tau} \Rightarrow \hat{g}^{\mu\tau}=(I-P^{\mu\tau}_{\text{out}})^{-1}g^{\mu\tau}.
\end{equation}

Using \eqref{eq: rewrite reach pi probability} and \eqref{eq: rewrite reach pi cost}, we can rewrite \eqref{eq: ACPC} as
\begin{equation}\label{eq: rewrite ACPC}
J_{\mathcal{G}}^{\mu\tau}=\underset{N\rightarrow\infty}{\lim\sup}\frac{1}{N}\sum_{k=0}^{N-1}\hat{P}^{{\mu\tau}^k}\hat{g}^{\mu\tau}.
\end{equation}

Proper policies $\mu$ and $\tau$ of the product SG $\mathcal{G}$ for Problem \ref{problem: ACPC} are related to proper policies $\hat{\mu}$ and $\hat{\tau}$ for generalized ACPS problem as follows:
\begin{equation}\label{eq: connection}
\hat{P}^{\mu\tau}=P^{\hat{\mu}\hat{\tau}},\quad \hat{g}^{\mu\tau}=g^{\hat{\mu}\hat{\tau}},\quad J_{\mathcal{G}}^{\mu\tau}=B^{\hat{\mu}\hat{\tau}}.
\end{equation}
If we define a bias term $h_{\mathcal{G}}^{\mu\tau}=b^{\hat{\mu}\hat{\tau}}$, then a gain-bias pair $(J_{\mathcal{G}}^{\mu\tau},h_{\mathcal{G}}^{\mu\tau})$ is constructed for Problem \ref{problem: ACPC}. Under the worst case adversary policy $\hat{\tau}$, the control policy that makes the gain-bias pair of ACPS problem satisfy
\begin{equation}\label{eq: ACPS optimality condition}
B+b\leq g^{\hat{\mu}\hat{\tau}}+P^{\hat{\mu}\hat{\tau}}b
\end{equation}
is optimal. That is, the control policy $\mu^*$ that maps to $\hat{\mu}^*$ is optimal.

To obtain the optimal control policy, we need to characterize Problem \ref{problem: ACPC} in terms of the control and adversary policies $\mu$ and $\tau$. The following lemma generalizes the results presented in \cite{ding2014optimal} in which no adversary is considered. For completeness, we show its proof which generalizes the proof in \cite{ding2014optimal}.
\begin{lemma}\label{lemma: gain-bias calculation}
The gain-bias pair $(J_{\mathcal{G}}^{\mu\tau},h_{\mathcal{G}}^{\mu\tau})$ of Problem \ref{problem: ACPC} under policies $\mu$ and $\tau$ satisfies the following equations:
\begin{align}
J_{\mathcal{G}}^{\mu\tau}&=P^{\mu\tau}J_{\mathcal{G}}^{\mu\tau},\\
J_{\mathcal{G}}^{\mu\tau}+h_{\mathcal{G}}^{\mu\tau}&=g^{\mu\tau}+P^{\mu\tau}h_{\mathcal{G}}^{\mu\tau}+P^{\mu\tau}_{\text{out}}J_{\mathcal{G}}^{\mu\tau},\\
P^{\mu\tau}v^{\mu\tau}&=(I-P^{\mu\tau}_{\text{out}})h_{\mathcal{G}}^{\mu\tau}+v^{\mu\tau},
\end{align}
for some vector $v^{\mu\tau}$.
\end{lemma}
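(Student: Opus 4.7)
The plan is to exploit the correspondence in~\eqref{eq: connection}, which identifies the ACPC gain-bias pair $(J_{\mathcal{G}}^{\mu\tau}, h_{\mathcal{G}}^{\mu\tau})$ on the product SG $\mathcal{G}$ with the generalized ACPS gain-bias pair $(B^{\hat{\mu}\hat{\tau}}, b^{\hat{\mu}\hat{\tau}})$ on the reduced system with dynamics $\hat{P}^{\mu\tau}$ and per-stage cost $\hat{g}^{\mu\tau}$. Applied to the communicating SG under proper $(\mu,\tau)$, Lemma~\ref{LEMMA: ACPS GAIN BIAS} gives, in vector form, $B^{\hat{\mu}\hat{\tau}} + b^{\hat{\mu}\hat{\tau}} = g^{\hat{\mu}\hat{\tau}} + P^{\hat{\mu}\hat{\tau}} b^{\hat{\mu}\hat{\tau}}$, with $B^{\hat{\mu}\hat{\tau}} = \zeta^{\hat{\mu}\hat{\tau}}\mathbf{1}$ constant across states; this is essentially all I will need from the ACPS side.

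The first equation $J_{\mathcal{G}}^{\mu\tau} = P^{\mu\tau} J_{\mathcal{G}}^{\mu\tau}$ follows at once, since $J_{\mathcal{G}}^{\mu\tau} = \zeta^{\hat{\mu}\hat{\tau}}\mathbf{1}$ is a scalar multiple of the all-ones vector and $P^{\mu\tau}\mathbf{1} = \mathbf{1}$ by row-stochasticity. For the second equation, I would start from the translated ACPS relation $J_{\mathcal{G}}^{\mu\tau} + h_{\mathcal{G}}^{\mu\tau} = \hat{g}^{\mu\tau} + \hat{P}^{\mu\tau} h_{\mathcal{G}}^{\mu\tau}$, substitute the closed-form expressions $\hat{g}^{\mu\tau} = (I - P^{\mu\tau}_{\text{out}})^{-1} g^{\mu\tau}$ from~\eqref{eq: rewrite reach pi cost} and $\hat{P}^{\mu\tau} = (I - P^{\mu\tau}_{\text{out}})^{-1} P^{\mu\tau}_{\text{in}}$ from~\eqref{eq: rewrite reach pi probability}, and then left-multiply both sides by $(I - P^{\mu\tau}_{\text{out}})$ to clear the inverse. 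Expanding $(I - P^{\mu\tau}_{\text{out}})(J_{\mathcal{G}}^{\mu\tau} + h_{\mathcal{G}}^{\mu\tau}) = g^{\mu\tau} + P^{\mu\tau}_{\text{in}} h_{\mathcal{G}}^{\mu\tau}$ and regrouping via $P^{\mu\tau} = P^{\mu\tau}_{\text{in}} + P^{\mu\tau}_{\text{out}}$ yields precisely $J_{\mathcal{G}}^{\mu\tau} + h_{\mathcal{G}}^{\mu\tau} = g^{\mu\tau} + P^{\mu\tau} h_{\mathcal{G}}^{\mu\tau} + P^{\mu\tau}_{\text{out}} J_{\mathcal{G}}^{\mu\tau}$.

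The hard part is the third equation, which asserts the existence of some $v^{\mu\tau}$ solving the Poisson-type equation $(P^{\mu\tau} - I) v^{\mu\tau} = (I - P^{\mu\tau}_{\text{out}}) h_{\mathcal{G}}^{\mu\tau}$. Because $P^{\mu\tau}$ is an irreducible stochastic matrix on the finite communicating state space, the Fredholm alternative says this is solvable iff the right-hand side is orthogonal to the unique stationary distribution $\pi^{\mu\tau}$, i.e., $(\pi^{\mu\tau})^T (I - P^{\mu\tau}_{\text{out}}) h_{\mathcal{G}}^{\mu\tau} = 0$. Using $(\pi^{\mu\tau})^T P^{\mu\tau} = (\pi^{\mu\tau})^T$ together with $P^{\mu\tau} = P^{\mu\tau}_{\text{in}} + P^{\mu\tau}_{\text{out}}$, I would reduce this scalar compatibility condition to $\sum_{s'\in\mathcal{E}} \pi^{\mu\tau}(s') h_{\mathcal{G}}^{\mu\tau}(s') = 0$. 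To enforce it I exploit the fact that the ACPS bias is determined only up to an additive constant: replacing $h_{\mathcal{G}}^{\mu\tau}$ by $h_{\mathcal{G}}^{\mu\tau} + c\mathbf{1}$ leaves equations one and two invariant (since $P^{\mu\tau}\mathbf{1} = \mathbf{1}$) and shifts the compatibility quantity by $c\sum_{s'\in\mathcal{E}}\pi^{\mu\tau}(s')$. Because $\sum_{s'\in\mathcal{E}}\pi^{\mu\tau}(s') > 0$ for the communicating SG under proper policies, a suitable $c$ always exists, and once compatibility holds the Poisson equation admits a solution $v^{\mu\tau}$ (obtained, for instance, through the group inverse of $I - P^{\mu\tau}$), completing the proof.
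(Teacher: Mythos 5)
Your proposal is correct, and for the first two equations it essentially mirrors the paper: the second equation is obtained exactly as in the paper's proof (translate the ACPS relation $J+h=\hat{g}+\hat{P}h$, substitute $\hat{g}^{\mu\tau}=(I-P^{\mu\tau}_{\text{out}})^{-1}g^{\mu\tau}$ and $\hat{P}^{\mu\tau}=(I-P^{\mu\tau}_{\text{out}})^{-1}P^{\mu\tau}_{\text{in}}$, clear the inverse, and regroup with $P^{\mu\tau}=P^{\mu\tau}_{\text{in}}+P^{\mu\tau}_{\text{out}}$); for the first you use constancy of the gain plus row-stochasticity, where the paper instead runs the same clear-the-inverse manipulation on $J=\hat{P}^{\mu\tau}J$ — both are valid, and the paper's version does not even need the gain to be constant. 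The genuine divergence is the third equation. The paper simply imports a third ACPS relation, $h^{\hat{\mu}\hat{\tau}}+v^{\hat{\mu}\hat{\tau}}=P^{\hat{\mu}\hat{\tau}}v^{\hat{\mu}\hat{\tau}}$ (stated without derivation, as part of the standard gain--bias--offset triple from average-cost theory), and pushes it through the same $(I-P^{\mu\tau}_{\text{out}})$ substitution to get $P^{\mu\tau}v^{\mu\tau}=(I-P^{\mu\tau}_{\text{out}})h_{\mathcal{G}}^{\mu\tau}+v^{\mu\tau}$. You instead prove solvability of the resulting Poisson equation $(P^{\mu\tau}-I)v^{\mu\tau}=(I-P^{\mu\tau}_{\text{out}})h_{\mathcal{G}}^{\mu\tau}$ from scratch via the Fredholm alternative, correctly reducing the compatibility condition to $\sum_{s'\in\mathcal{E}}\pi^{\mu\tau}(s')h_{\mathcal{G}}^{\mu\tau}(s')=0$ and enforcing it by exploiting the additive-constant freedom in the bias (which leaves the first two equations invariant). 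This buys you something the paper glosses over: the bias constructed in Lemma~\ref{LEMMA: ACPS GAIN BIAS} is normalized by $b(s^{\prime})=0$ at a reference recurrent state, and that particular normalization need not satisfy the third ACPS relation; your explicit renormalization argument closes that loophole, at the cost of needing irreducibility (or at least unichain structure with $\mathcal{E}$ meeting the recurrent class) so that $\sum_{s'\in\mathcal{E}}\pi^{\mu\tau}(s')>0$ — an assumption that is anyway implicit in the paper's restriction to communicating SGs and proper policies.
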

\begin{proof}
Given the policies $\hat{\mu}$ and $\hat{\tau}$ for ACPS problem, we have
\begin{align*}
J_{\mathcal{G}}^{\hat{\mu}\hat{\tau}}&=P^{\hat{\mu}\hat{\tau}}J_{\mathcal{G}}^{\hat{\mu}\hat{\tau}},\\
J_{\mathcal{G}}^{\hat{\mu}\hat{\tau}}+h_{\mathcal{G}}^{\hat{\mu}\hat{\tau}}&=g^{\hat{\mu}\hat{\tau}}+P^{\hat{\mu}\hat{\tau}}h_{\mathcal{G}}^{\hat{\mu}\hat{\tau}},\\
h_{\mathcal{G}}^{\hat{\mu}\hat{\tau}}+v^{\hat{\mu}\hat{\tau}}&=P^{\hat{\mu}\hat{\tau}}v^{\hat{\mu}\hat{\tau}},
\end{align*}
Due to the connection between the control policy of Problem \ref{problem: ACPC} and generalized ACPS problem, we have $$J_\mathcal{G}^{\mu\tau} = P^{\hat{\mu}\hat{\tau}}J_\mathcal{G}^{\mu\tau}=(I-P^{\mu\tau}_{\text{out}})^{-1}P^{\mu\tau}_{\text{in}}J_\mathcal{G}^{\mu\tau}.$$ By rearranging the equation above, we have $(I-P^{\mu\tau}_{\text{out}})J_\mathcal{G}^{\mu\tau}=J_\mathcal{G}^{\mu\tau}-P^{\mu\tau}_{\text{out}}J_\mathcal{G}^{\mu\tau}=P^{\mu\tau}_{\text{in}}J_\mathcal{G}^{\mu\tau}.$
Thus $J_\mathcal{G}^{\mu\tau}=(P^{\mu\tau}_{\text{out}}+P^{\mu\tau}_{\text{in}})J_\mathcal{G}^{\mu\tau}=P^{\mu\tau}J_\mathcal{G}^{\mu\tau}.$ The expression $J_{\mathcal{G}}^{\mu\tau}+h_{\mathcal{G}}^{\mu\tau}=g^{\mu\tau}+P^{\mu\tau}h_{\mathcal{G}}^{\mu\tau}+P^{\mu\tau}_{\text{out}}J_{\mathcal{G}}^{\mu\tau}$ can be rewritten using \eqref{eq: rewrite reach pi probability} and \eqref{eq: rewrite reach pi cost}. We have $$J_{\mathcal{G}}^{\mu\tau}+h_{\mathcal{G}}^{\mu\tau}=(I-P^{\mu\tau}_{\text{out}})^{-1}(g^{\mu\tau}+P^{\mu\tau}_{\text{in}}h_{\mathcal{G}}^{\mu\tau}).$$ Manipulating the equation above, we see that $(I-P^{\mu\tau}_{\text{out}})(J_{\mathcal{G}}^{\mu\tau}+h_{\mathcal{G}}^{\mu\tau})=g^{\mu\tau}+P^{\mu\tau}_{\text{in}}h_{\mathcal{G}}^{\mu\tau}.$ Then we can see that 
\begin{align*}
J_{\mathcal{G}}^{\mu\tau}+h_{\mathcal{G}}^{\mu\tau}&=g^{\mu\tau}+(P^{\mu\tau}_{\text{in}}+P^{\mu\tau}_{\text{out}})h_{\mathcal{G}}^{\mu\tau}+P^{\mu\tau}_{\text{out}}J_{\mathcal{G}}^{\mu\tau}\\
&=g^{\mu\tau}+P^{\mu\tau}h_{\mathcal{G}}^{\mu\tau}+P^{\mu\tau}_{\text{out}}J_{\mathcal{G}}^{\mu\tau}.
\end{align*}
Start from $h_{\mathcal{G}}^{\hat{\mu}\hat{\tau}}+v^{\hat{\mu}\hat{\tau}}=P^{\hat{\mu}\hat{\tau}}v^{\hat{\mu}\hat{\tau}}$. We see that $h_{\mathcal{G}}^{\hat{\mu}\hat{\tau}}+v^{\hat{\mu}\hat{\tau}}=(I-P^{\mu\tau}_{\text{out}})^{-1}P^{\mu\tau}_{\text{in}}v^{\mu\tau}.$ Therefore we have
\begin{equation*}
(I-P^{\mu\tau}_{\text{out}})h_{\mathcal{G}}^{\mu\tau}+v^{\mu\tau}=P^{\mu\tau}v^{\mu\tau},
\end{equation*}
which completes our proof.
\end{proof}
Lemma \ref{lemma: gain-bias calculation} indicates that the gain-bias pair can be solved as solutions to a linear system with $3n$ unknowns. Thus we can evaluate any control and adversary policies using Lemma \ref{lemma: gain-bias calculation}, which provides us the potential to implement iterative algorithm to compute the optimal control policy $\mu$.

\begin{figure*}[!t]
\normalsize
\setcounter{mytempeqncnt}{\value{equation}}
\setcounter{equation}{35}
\begin{align}
&\left(T^*(J_{\mathcal{G}},h_{\mathcal{G}})\right)(s)=\min_{\mu}\max_{\tau}\bigg[\sum_{u_{C} \in U_{C}(s)}\sum_{u_{A} \in U_{A}(s)}\mu(s,u_C)\tau(s,u_{A})g(s)+\sum_{u_{C} \in U_{C}(s)}\sum_{u_{A} \in U_{A}(s)}\sum_{s^\prime=1}^n\mu(s,u_{C})\label{eq: optimal operator}\\
&\quad\quad\quad\quad\cdot\tau(s,u_{A})Pr_\mathcal{G}(s,u_C,u_A,s^\prime)h_{\mathcal{G}}(s^\prime)+\sum_{u_{C} \in U_{C}(s)}\sum_{u_{A} \in U_{A}(s)}\sum_{s^\prime=l+1}^n\mu(s,u_{C})\tau(s,u_{A})Pr_\mathcal{G}(s,u_C,u_A,s^\prime)J_{\mathcal{G}}(s^\prime)\bigg],~\forall s\nonumber\\
&\left(T_\mu(J_{\mathcal{G}},h_{\mathcal{G}})\right)(s)=\max_{\tau}\bigg[\sum_{u_{C} \in U_{C}(s)}\sum_{u_{A} \in U_{A}(s)}\mu(s,u_C)\tau(s,u_{A})g(s)+\sum_{u_{C} \in U_{C}(s)}\sum_{u_{A} \in U_{A}(s)}\sum_{s^\prime=1}^n\mu(s,u_{C})\tau(s,u_{A})\label{eq: regular operator}\\
&\quad\quad\quad\quad\cdot  Pr_\mathcal{G}(s,u_C,u_A,s^\prime)h_{\mathcal{G}}(s^\prime)+\sum_{u_{C} \in U_{C}(s)}\sum_{u_{A} \in U_{A}(s)}\sum_{s^\prime=l+1}^n\mu(s,u_{C})\tau(s,u_{A})Pr_\mathcal{G}(s,u_C,u_A,s^\prime)J_{\mathcal{SG}}(s^\prime)\bigg].~\forall s\nonumber
\end{align}
\setcounter{equation}{\value{mytempeqncnt}}
\hrulefill
\vspace*{4pt}
\end{figure*}

\addtocounter{equation}{2}
To compute the control policy $\mu$, we define two operators on $(J_{\mathcal{G}},h_{\mathcal{G}})$ in \eqref{eq: optimal operator} and \eqref{eq: regular operator}, denoted as $T^*(J_{\mathcal{G}},h_{\mathcal{G}})$ and $T(J_{\mathcal{G}},h_{\mathcal{G}})$. Generally speaking, we can view them as mappings from $(J_{\mathcal{G}},h_{\mathcal{G}})$ to $T^*(J_{\mathcal{G}},h_{\mathcal{G}})\in\mathbb{R}^n$ and $T_\mu(J_{\mathcal{G}},h_{\mathcal{G}})\in\mathbb{R}^n$, respectively. 
Note that in \eqref{eq: regular operator}, the transition probability is the one induced under a certain control policy $\mu$.

Based on the definitions above, we present the optimality conditions for Problem \ref{problem: ACPC} using the following theorem.
\begin{theorem}\label{theorem: optimality condition}
The control policy $\mu$ with gain-bias pair $(J_{\mathcal{G}}^{\mu\tau},h_{\mathcal{G}}^{\mu\tau})$ that satisfies
\begin{equation}\label{eq: optimality condition}
J_{\mathcal{G}}^{\mu\tau}+h_{\mathcal{G}}^{\mu\tau}=T^*(J_{\mathcal{G}}^{\mu\tau},h_{\mathcal{G}}^{\mu\tau})
\end{equation}
is the optimal control policy.
\end{theorem}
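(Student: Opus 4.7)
The proof plan is to exploit the connection \eqref{eq: connection} between Problem~\ref{problem: ACPC} on the product SG $\mathcal{G}$ and the generalized ACPS problem. Under that correspondence, the gain-bias pair $(J_{\mathcal{G}}^{\mu\tau},h_{\mathcal{G}}^{\mu\tau})$ for Problem~\ref{problem: ACPC} maps to the ACPS gain-bias pair $(B^{\hat{\mu}\hat{\tau}},b^{\hat{\mu}\hat{\tau}})$ via the transformations \eqref{eq: rewrite reach pi probability} and \eqref{eq: rewrite reach pi cost}. Since Lemma~\ref{LEMMA: ACPS GAIN BIAS} together with \eqref{eq: ACPS optimality condition} already characterizes the optimal policy for the generalized ACPS problem, it suffices to show that the fixed-point condition \eqref{eq: optimality condition} is equivalent to the ACPS optimality condition \eqref{eq: ACPS gain bias} applied at $(B,b)=(J_{\mathcal{G}}^{\mu\tau},h_{\mathcal{G}}^{\mu\tau})$, and then pull optimality back to Problem~\ref{problem: ACPC}.

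First I would apply Lemma~\ref{lemma: gain-bias calculation} to expand $J_{\mathcal{G}}^{\mu\tau}+h_{\mathcal{G}}^{\mu\tau}=g^{\mu\tau}+P^{\mu\tau}h_{\mathcal{G}}^{\mu\tau}+P^{\mu\tau}_{\text{out}}J_{\mathcal{G}}^{\mu\tau}$, which is precisely the inner expression of $T^*$ evaluated at the pair $(\mu,\tau)$. The condition \eqref{eq: optimality condition} therefore asserts that $\mu$ is the minimizer and $\tau$ the corresponding best response in the Stackelberg sense at these gain-bias values. Next I would perform the algebraic manipulation
\begin{equation*}
B+b = g^{\mu\tau}+P^{\mu\tau}b+P^{\mu\tau}_{\text{out}}B \;\Longleftrightarrow\; (I-P^{\mu\tau}_{\text{out}})(B+b) = g^{\mu\tau}+P^{\mu\tau}_{\text{in}}b \;\Longleftrightarrow\; B+b = \hat{g}^{\mu\tau}+\hat{P}^{\mu\tau} b,
\end{equation*}
with $B=J_{\mathcal{G}}^{\mu\tau}$ and $b=h_{\mathcal{G}}^{\mu\tau}$. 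The rightmost form matches the generalized ACPS Bellman equation \eqref{eq: ACPS gain bias}. Because the inverse $(I-P^{\mu\tau}_{\text{out}})^{-1}$ is componentwise nonnegative, the transformation preserves orderings, so the min over $\mu$ and max over $\tau$ on the left-hand form is attained by the same policies that attain the corresponding min-max on the ACPS form.

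Given this equivalence, I would conclude that $\hat{\mu}$ corresponding to $\mu$ satisfies the ACPS optimality condition, hence is optimal for the generalized ACPS problem; pulling back via \eqref{eq: connection}, the policy $\mu$ minimizes the average cost per cycle $J_{\mathcal{G}}^{\mu\tau}$ in \eqref{eq: rewrite ACPC} and is therefore optimal for Problem~\ref{problem: ACPC}. The main obstacle I anticipate is that $(I-P^{\mu\tau}_{\text{out}})^{-1}$ depends on the policy pair $(\mu,\tau)$ itself, so the claim that min-max commutes with the transformation is not fully automatic and needs justification beyond pointwise linearity. To close this gap I would complement the algebraic argument with a direct comparison: for any alternative proper policy $\mu'$ with its worst-case response $\tau'$, expand both $J_{\mathcal{G}}^{\mu\tau}+h_{\mathcal{G}}^{\mu\tau}$ and $J_{\mathcal{G}}^{\mu'\tau'}+h_{\mathcal{G}}^{\mu'\tau'}$ via Lemma~\ref{lemma: gain-bias calculation}, subtract, and iterate the resulting inequality under the nonnegative kernel $P^{\mu'\tau'}$ to deduce $J_{\mathcal{G}}^{\mu'\tau'}\geq J_{\mathcal{G}}^{\mu\tau}$ componentwise, which yields the optimality of $\mu$ unconditionally.
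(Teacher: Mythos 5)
Your proposal is correct and follows essentially the same route as the paper: transform the fixed-point condition \eqref{eq: optimality condition} into the generalized ACPS optimality inequality \eqref{eq: ACPS optimality condition} via multiplication by $(I-P_{\text{out}})^{-1}$, using its componentwise nonnegativity, and then invoke ACPS optimality. The subtlety you flag (policy-dependence of the inverse) is real, and the paper resolves it exactly as in your ``direct comparison'' step: for each competitor $(\hat{\mu},\hat{\tau})$ the $T^*$-minimality gives $J_{\mathcal{G}}^{\mu\tau}+h_{\mathcal{G}}^{\mu\tau}\leq g^{\hat{\mu}\hat{\tau}}+P^{\hat{\mu}\hat{\tau}}h_{\mathcal{G}}^{\mu\tau}+P^{\hat{\mu}\hat{\tau}}_{\text{out}}J_{\mathcal{G}}^{\mu\tau}$, and one multiplies by that competitor's own $(I-P^{\hat{\mu}\hat{\tau}}_{\text{out}})^{-1}$ to land on the ACPS condition, so no commutation of min-max with the transformation is ever needed.
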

\begin{proof}
Consider any arbitrary control policy $\hat{\mu}$ and the worst case adversary policy $\hat{\tau}$. By definition of $T^*(\cdot)$ in \eqref{eq: optimal operator}, we have that \eqref{eq: optimality condition} implies $J_{\mathcal{G}}^{\mu\tau}+h_{\mathcal{G}}^{\mu\tau}\leq g^{\hat{\mu}\hat{\tau}}+P^{\hat{\mu}\hat{\tau}}h_{\mathcal{G}}^{\mu\tau}+P^{\hat{\mu}\hat{\tau}}_{\text{out}}J_{\mathcal{G}}^{\mu\tau},$ where $P^{\hat{\mu}\hat{\tau}}$ and $P^{\hat{\mu}\hat{\tau}}_{\text{out}}$ are the transition probability matrix induced by policies $\hat{\mu}$ and $\hat{\tau}$. Then we have 
\begin{align*}
J_{\mathcal{G}}^{\mu\tau}+h_{\mathcal{G}}^{\mu\tau}-P^{\hat{\mu}\hat{\tau}}_{\text{out}}J_{\mathcal{G}}^{\mu\tau}&\leq g^{\hat{\mu}\hat{\tau}}+P^{\hat{\mu}\hat{\tau}}h_{\mathcal{G}}^{\mu\tau}\\
&=g^{\hat{\mu}\hat{\tau}}+(P^{\hat{\mu}\hat{\tau}}_{\text{in}}+P^{\hat{\mu}\hat{\tau}}_{\text{out}})h_{\mathcal{G}}^{\mu\tau}.
\end{align*}
Thus we observe that $(I-P^{\hat{\mu}\hat{\tau}}_{\text{out}})(J_{\mathcal{G}}^{\mu\tau}+h_{\mathcal{G}}^{\mu\tau})\leq g^{\hat{\mu}\hat{\tau}}+P^{\hat{\mu}\hat{\tau}}_{\text{in}}h_{\mathcal{G}}^{\mu\tau}.$ Note that $(I-P^{\hat{\mu}\hat{\tau}}_{\text{out}})$ is invertible. Thus the inequality above is rewritten as $J_{\mathcal{G}}^{\mu\tau}+h_{\mathcal{G}}^{\mu\tau}\leq (I-P^{\hat{\mu}\hat{\tau}}_{\text{out}})^{-1}(g^{\hat{\mu}\hat{\tau}}+P^{\hat{\mu}\hat{\tau}}_{\text{in}}h_{\mathcal{G}}^{\mu\tau}).$ Rewrite the inequality above according to \eqref{eq: rewrite reach pi probability} and \eqref{eq: rewrite reach pi cost}. Then we have $$J_{\mathcal{G}}^{\mu\tau}+h_{\mathcal{G}}^{\mu\tau}\leq g^{\tilde{\mu}\tilde{\tau}}+P^{\tilde{\mu}\tilde{\tau}}h_{\mathcal{G}}^{\mu\tau},$$ where $\tilde{\mu}$ and $\tilde{\tau}$ are the control and adversary policies in the associated ACPS problem. Thus, $\tilde{\mu}^*$ satisfies \eqref{eq: ACPS optimality condition} and $\mu$ is optimal over all the proper policies.
\end{proof}

\begin{center}
\begin{algorithm}[!htp]
	\caption{Algorithm for a control strategy that minimizes the expected number of invariant constraint violations.}
	\label{algo:ACPC}
	\begin{algorithmic}[1]
		\Procedure{Min\_Violation}{$\mathcal{G}$, $\mathcal{C}$}
		\State \textbf{Input}: product SG $\mathcal{G}$, the set GAMECs $\mathcal{C}$ associated with formula $\phi_1$ 
		\State \textbf{Output:} Control policy $\mu_{\text{cycle}}$
		\State \textbf{Initialization:} Initialize $\mu^0$ and $\tau^0$ be proper policies.
        \While{$T^*(J_{\mathcal{C}}^{\mu^{k}\tau^k},h_{\mathcal{C}}^{\mu^{k}\tau^k})\neq T^*(J_{\mathcal{C}}^{\mu^{k-1}\tau^{k-1}},h_{\mathcal{C}}^{\mu^{k-1}\tau^{k-1}})$}		
        \State Policy Evaluation: Given $\mu^k$ and $\tau^k$, calculate the gain-bias pair $(J_{\mathcal{C}}^{\mu^{k}\tau^k},h_{\mathcal{C}}^{\mu^{k}\tau^k})$ using Lemma \ref{lemma: gain-bias calculation}.
        \State Policy Improvement: Calculate the control policy $\mu$ using $\mu=\argmin_{\mu}\>\argmax_{\tau}\left\{ g^{\mu\tau}+P^{\mu\tau}h_{\mathcal{C}}^{\mu^k\tau^k}+P^{\mu\tau}_{\text{out}}J_{\mathcal{C}}^{\mu^k\tau^k}\right\}$.
        \State Set $\mu^{k+1}=\mu$.
        \State Set $k = k + 1$.
        \EndWhile
        \EndProcedure        	
	 \end{algorithmic}
\end{algorithm}
\end{center}

\textbf{Optimal control policy for Problem \ref{problem: ACPC}}. In the following, we focus on how to obtain an optimal secure control policy. First, note that the optimal control policy consists of two parts. The first part, denoted as $\mu_{\text{reach}}$, maximizes the probability of satisfying specification $\phi_1$, while the second part, denoted as $\mu_{\text{cycle}}$, minimizes the violation cost per cycle due to violating invariant property. Following the procedure described in Algorithm \ref{algo:reachability}, we can obtain the control policy $\mu_{\text{reach}}$ that maximizes the probability of satisfying specification $\phi_1$. Suppose the set of accepting states $\mathcal{E}$ has been reached. Then the control policy $\mu_{\text{cycle}}$ that optimizes the long term performance of the system is generated using Algorithm \ref{algo:ACPC}. Algorithm \ref{algo:ACPC} first initializes the control and adversary policies arbitrarily (e.g., if $\mu^0$ and $\tau^0$ are set as uniform distributions, then $\mu^0(s,u_C)=1/|U_C(s)|$ and $\tau^0(s,u_A)=1/|U_A(s)|$ for all $s$, $u_C$ and $u_A$). Then it follows a policy iteration procedure to update the control and the corresponding adversary policies until no more improvement can be made. Given $\mu_{\text{reach}}$ and $\mu_{\text{cycle}}$, we can construct the optimal control policy for Problem \ref{problem: ACPC} as
\begin{equation}
\mu^*=\begin{cases}
\mu_{\text{reach}},\quad \text{if}~s\notin \mathcal{E}\\
\mu_{\text{cycle}},\quad \text{if}~s\in \mathcal{E}
\end{cases}.
\end{equation}

We finally present the convergence and optimality of Algorithm \ref{algo:ACPC} using the following theorem. 
\begin{theorem}\label{theom: correctness of algo 5}
Algorithm \ref{algo:ACPC} terminates within a finite number of iterations for any given accepting state set $\mathcal{E}$. Moreover, the result returned by Algorithm \ref{algo:ACPC} satisfies the optimality conditions for Problem \ref{problem: ACPC}.
\end{theorem}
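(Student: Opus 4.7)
The plan is to adapt the classical policy iteration convergence argument to the stochastic-game setting by exploiting the reduction to the generalized ACPS problem developed in Lemma~\ref{LEMMA: ACPS GAIN BIAS} and the optimality characterization in Theorem~\ref{theorem: optimality condition}. I would split the proof into three parts: monotone improvement of the gain-bias pair along the iterates, finite termination, and verifying that the terminal policy satisfies the Bellman-type equation $J+h = T^*(J,h)$.

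First, I would prove that the sequence $(J_{\mathcal{G}}^{\mu^k\tau^k}, h_{\mathcal{G}}^{\mu^k\tau^k})$ is componentwise nonincreasing along the iterations. By the policy evaluation step together with Lemma~\ref{lemma: gain-bias calculation}, the current pair satisfies $J_{\mathcal{G}}^{\mu^k\tau^k} + h_{\mathcal{G}}^{\mu^k\tau^k} = T_{\mu^k}(J_{\mathcal{G}}^{\mu^k\tau^k}, h_{\mathcal{G}}^{\mu^k\tau^k})$, since $\tau^k$ is the adversary's best response embedded in the max of \eqref{eq: regular operator}. The policy improvement step selects $\mu^{k+1}$ such that $T_{\mu^{k+1}}(J_{\mathcal{G}}^{\mu^k\tau^k}, h_{\mathcal{G}}^{\mu^k\tau^k}) = T^*(J_{\mathcal{G}}^{\mu^k\tau^k}, h_{\mathcal{G}}^{\mu^k\tau^k}) \le T_{\mu^k}(J_{\mathcal{G}}^{\mu^k\tau^k}, h_{\mathcal{G}}^{\mu^k\tau^k}) = J_{\mathcal{G}}^{\mu^k\tau^k} + h_{\mathcal{G}}^{\mu^k\tau^k}$. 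Iterating $T_{\mu^{k+1}}$ starting from this inequality and using the monotonicity of the induced substochastic operator (which holds because the iterates remain proper), the limit pair, which by Lemma~\ref{lemma: gain-bias calculation} equals $(J_{\mathcal{G}}^{\mu^{k+1}\tau^{k+1}}, h_{\mathcal{G}}^{\mu^{k+1}\tau^{k+1}})$, satisfies $J_{\mathcal{G}}^{\mu^{k+1}\tau^{k+1}} + h_{\mathcal{G}}^{\mu^{k+1}\tau^{k+1}} \le J_{\mathcal{G}}^{\mu^k\tau^k} + h_{\mathcal{G}}^{\mu^k\tau^k}$, with strict inequality in at least one coordinate unless $\mu^k$ already satisfies the optimality condition.

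For finite termination, at each state the policy improvement step is a finite matrix min-max game over the controller's mixed strategies and the adversary's responses using the current values of $(J_{\mathcal{G}}^{\mu^k\tau^k}, h_{\mathcal{G}}^{\mu^k\tau^k})$. Because the adversary is a follower, its optimal reply at each state can always be restricted to a pure action in $U_A(s)$, and the controller's optimal mixed strategy then lies at a vertex of the polytope defined by the associated linear program. Since $U_C$ and $U_A$ are finite, only finitely many such vertex policies exist, and combined with the strict monotone decrease off of optimality (which precludes any policy from being revisited), this yields termination in finitely many steps. The main obstacle is making this argument precise in the mixed-strategy setting: although the admissible policy space is a continuum, the effective set of candidate policies produced by policy improvement must be shown to be discrete.

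Finally, when the algorithm terminates, the stopping condition reads $T^*(J_{\mathcal{G}}^{\mu^k\tau^k}, h_{\mathcal{G}}^{\mu^k\tau^k}) = T^*(J_{\mathcal{G}}^{\mu^{k-1}\tau^{k-1}}, h_{\mathcal{G}}^{\mu^{k-1}\tau^{k-1}})$. Combining this with the evaluation equation $J_{\mathcal{G}}^{\mu^k\tau^k} + h_{\mathcal{G}}^{\mu^k\tau^k} = T_{\mu^k}(J_{\mathcal{G}}^{\mu^k\tau^k}, h_{\mathcal{G}}^{\mu^k\tau^k})$ and the fact that $\mu^k$ was chosen in the previous improvement step as the arg-min attaining $T^*(J_{\mathcal{G}}^{\mu^{k-1}\tau^{k-1}}, h_{\mathcal{G}}^{\mu^{k-1}\tau^{k-1}})$, a short chain of substitutions yields $J_{\mathcal{G}}^{\mu^k\tau^k} + h_{\mathcal{G}}^{\mu^k\tau^k} = T^*(J_{\mathcal{G}}^{\mu^k\tau^k}, h_{\mathcal{G}}^{\mu^k\tau^k})$. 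By Theorem~\ref{theorem: optimality condition} this is precisely the optimality condition for Problem~\ref{problem: ACPC}, so the policy returned by Algorithm~\ref{algo:ACPC} is optimal.
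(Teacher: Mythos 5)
Your overall skeleton matches the paper's proof exactly: establish monotone improvement of the gain-bias pair across iterations, argue finite termination, and then combine the stopping condition with the policy-evaluation and policy-improvement equations to conclude $J_{\mathcal{G}}^{\mu^k\tau^k}+h_{\mathcal{G}}^{\mu^k\tau^k}=T^*(J_{\mathcal{G}}^{\mu^k\tau^k},h_{\mathcal{G}}^{\mu^k\tau^k})$ and invoke Theorem~\ref{theorem: optimality condition}. Your final step is essentially identical to the paper's. However, your monotonicity argument has a genuine gap: you claim that iterating $T_{\mu^{k+1}}$ from the inequality $T_{\mu^{k+1}}(J^k,h^k)\le J^k+h^k$ converges to the pair $(J_{\mathcal{G}}^{\mu^{k+1}\tau^{k+1}},h_{\mathcal{G}}^{\mu^{k+1}\tau^{k+1}})$. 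In the average-cost-per-cycle setting $T_{\mu}$ is not a contraction and its iterates do not converge to the gain-bias pair (the value grows roughly like $N J_{\mu}$ in $N$), so this step fails as stated. The paper instead defines the nonnegative slack vector $\delta = J^k\mathbf{1}+h^k-g^{\mu^{k+1}\tau^{k+1}}-P^{\mu^{k+1}\tau^{k+1}}h^k-P^{\mu^{k+1}\tau^{k+1}}_{\text{out}}J^k\mathbf{1}$, rewrites it as a combination of the gain differences and bias differences between iterations, and then applies the Ces\`{a}ro average $\frac{1}{T}\sum_{t=0}^{T-1}{P^{\mu^{k+1}\tau^{k+1}}}^t$ so that the bias terms wash out and one first obtains $J^k\ge J^{k+1}$; only in the case of equal gains does it then compare the biases on recurrent and transient states separately. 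Some argument of this two-stage form is needed to make your first part rigorous.

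On the finite-termination step, your concern about the mixed-strategy continuum is legitimate, and your proposed resolution (the adversary's best response can be taken pure, and the controller's improvement step selects a vertex of a polytope defined by a finite LP, so only finitely many candidate policies arise) is a reasonable way to close it; notably, the paper's own proof does not address this point at all and implicitly relies on strict improvement precluding revisits, as in classical finite-policy-space policy iteration. So on this one point you are more careful than the paper, while on the monotonicity step you are less careful. Neither issue changes the architecture of the argument, which is the same as the paper's.
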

\begin{proof}
In the following, we first prove Algorithm \ref{algo:ACPC} converges within a finite number of iterations. Then we prove that the results returned by Algorithm \ref{algo:ACPC} satisfies the optimality conditions in Theorem \ref{theorem: optimality condition}. We denote the iteration index as $k$. The control policy at $k$th iteration is denoted as $\mu^k$. The worst case adversary policy associated with $\mu^k$ is denoted as $\tau^k$. Define a vector $\delta\in\mathbb{R}^n$ as $\delta = J_{\mathcal{G}}^{\mu^k\tau^k}\mathbf{1}+h_{\mathcal{G}}^{\mu^k\tau^k}
-g^{\mu^{k+1}\tau^{k+1}}-P^{\mu^{k+1}\tau^{k+1}}h_{\mathcal{G}}^{\mu^k\tau^k}-P^{\mu^{k+1}\tau^{k+1}}_{\text{out}}J_{\mathcal{G}}^{\mu^k\tau^k}\mathbf{1}.$

By Lemma \ref{lemma: gain-bias calculation}, we have $J_{\mathcal{G}}^{\mu^k\tau^k}\mathbf{1}+h_{\mathcal{G}}^{\mu^k\tau^k} = g^{\mu\tau}+P^{\mu\tau}h_{\mathcal{G}}^{\mu^k\tau^k}+P^{\mu\tau}_{\text{out}}J_{\mathcal{G}}^{\mu^k\tau^k}.$ By the definition of $T^*(\cdot)$ in \eqref{eq: optimal operator}, the control policy at iteration $k+1$ is computed by optimizing $g^{\mu\tau}+P^{\mu\tau}h_{\mathcal{G}}^{\mu^k\tau^k}+P^{\mu\tau}_{\text{out}}J_{\mathcal{G}}^{\mu^k\tau^k}$. Thus we have that for all $s$, $\delta(s)\geq 0$. Moreover, we can rewrite vector $\delta$ as
\begin{align*}
\delta&=J_{\mathcal{G}}^{\mu^k\tau^k}\mathbf{1}+h_{\mathcal{G}}^{\mu^k\tau^k}-g^{\mu^{k+1}\tau^{k+1}}-P^{\mu^{k+1}\tau^{k+1}}h_{\mathcal{G}}^{\mu^{k+1}\tau^{k+1}}\\
&\quad-P^{\mu^{k+1}\tau^{k+1}}_{\text{out}}J_{\mathcal{G}}^{\mu^{k+1}\tau^{k+1}}\mathbf{1}
+P^{\mu^{k+1}\tau^{k+1}}h_{\mathcal{G}}^{\mu^{k+1}\tau^{k+1}}\\
&\quad+P^{\mu^{k+1}\tau^{k+1}}_{\text{out}}J_{\mathcal{G}}^{\mu^{k+1}\tau^{k+1}}\mathbf{1}
-P^{\mu^{k+1}\tau^{k+1}}h_{\mathcal{G}}^{\mu^k\tau^k}\\
&\quad-P^{\mu^{k+1}\tau^{k+1}}_{\text{out}}J_{\mathcal{G}}^{\mu^k\tau^k}\mathbf{1}\\
&=J_{\mathcal{G}}^{\mu^k\tau^k}\mathbf{1}+h_{\mathcal{G}}^{\mu^k\tau^k}-J_{\mathcal{G}}^{\mu^{k+1}\tau^{k+1}}\mathbf{1}-h_{\mathcal{G}}^{\mu^{k+1}\tau^{k+1}}\\ &\quad-P^{\mu^{k+1}\tau^{k+1}}\left(h_{\mathcal{G}}^{\mu^k\tau^k}-h_{\mathcal{G}}^{\mu^{k+1}\tau^{k+1}}\right)\\
&\quad-P^{\mu^{k+1}\tau^{k+1}}_{\text{out}}\left(J_{\mathcal{G}}^{\mu^k\tau^k}-J_{\mathcal{G}}^{\mu^{k+1}\tau^{k+1}}\right)\mathbf{1},
\end{align*}
where the second equality holds by Lemma \ref{lemma: gain-bias calculation}. Thus $\delta$ can be represented as
\begin{multline}\label{eq: ACPC difference}
\delta=\left(I-P^{\mu^{k+1}\tau^{k+1}}_{\text{out}}\right)\left(J_{\mathcal{G}}^{\mu^k\tau^k}-J_{\mathcal{G}}^{\mu^{k+1}\tau^{k+1}}\right)\mathbf{1}
\\+\left(I-P^{\mu^{k+1}\tau^{k+1}}\right)\left(h_{\mathcal{G}}^{\mu^k\tau^k}-h_{\mathcal{G}}^{\mu^{k+1}\tau^{k+1}}\right),
\end{multline}
where $I$ is the identity matrix. By multiplying ${P^{\mu^{k+1}\tau^{k+1}}}^t$ to both sides of \eqref{eq: ACPC difference} and calculating the summation over $t$ from $0$ to $T-1$, we have that
\begin{multline}\label{eq: infinity sum}
\sum_{t=0}^{T-1}{P^{\mu^{k+1}\tau^{k+1}}}^t\delta =  \sum_{t=0}^{T-1}{P^{\mu^{k+1}\tau^{k+1}}}^t\left(I-P^{\mu^{k+1}\tau^{k+1}}_{\text{out}}\right)\\
\cdot\left(J_{\mathcal{G}}^{\mu^k\tau^k}-J_{\mathcal{G}}^{\mu^{k+1}\tau^{k+1}}\right)\mathbf{1}
+\sum_{t=0}^{T-1}{P^{\mu^{k+1}\tau^{k+1}}}^t\left(I-P^{\mu^{k+1}\tau^{k+1}}\right)\\
\cdot\left(h_{\mathcal{G}}^{\mu^k\tau^k}-h_{\mathcal{G}}^{\mu^{k+1}\tau^{k+1}}\right).
\end{multline}
Divide both sides by $T$ and let $T\rightarrow\infty$. Then we have
\begin{multline}\label{eq: gain difference}
\lim_{T\rightarrow\infty}\sum_{t=0}^{T-1}\frac{1}{T}{P^{\mu^{k+1}\tau^{k+1}}}^t\delta=
\lim_{T\rightarrow\infty}\sum_{t=0}^{T-1}\frac{1}{T}
\big({P^{\mu^{k+1}\tau^{k+1}}}^t\\-{P^{\mu^{k+1}\tau^{k+1}}}^tP^{\mu^{k+1}\tau^{k+1}}_{\text{out}}\big)
\big(J_{\mathcal{G}}^{\mu^k\tau^k}-J_{\mathcal{G}}^{\mu^{k+1}\tau^{k+1}}\big)\mathbf{1}
\end{multline}
since the second term of (\ref{eq: infinity sum}) is eliminated when $T\rightarrow\infty$. 
Since $P^{\mu^{k+1}\tau^{k+1}}_{\text{out}}$ is a substochastic matrix, we have $P^{\mu^{k+1}\tau^{k+1}}_{\text{out}}\mathbf{1}\leq \mathbf{1}$. Furthermore, since $P^{\mu^{k+1}\tau^{k+1}}$ is a stochastic matrix, we see that $\mathbf{1}-P^{\mu^{k+1}\tau^{k+1}}_{\text{out}}\mathbf{1}\geq 0$. Thus we have $\left({P^{\mu^{k+1}\tau^{k+1}}}^t-{P^{\mu^{k+1}\tau^{k+1}}}^tP^{\mu^{k+1}\tau^{k+1}}_{\text{out}}\right)\mathbf{1}\geq 0.$ Given the inequality above and $\delta \geq 0$, we have that $J_{\mathcal{G}}^{\mu^k\tau^k}-J_{\mathcal{G}}^{\mu^{k+1}\tau^{k+1}}\geq 0$ by observing \eqref{eq: gain difference}, which implies that $J_{\mathcal{G}}^{\mu^k\tau^k}\geq J_{\mathcal{G}}^{\mu^{k+1}\tau^{k+1}}$.

Consider the scenario where $J_{\mathcal{G}}^{\mu^k\tau^k}= J_{\mathcal{G}}^{\mu^{k+1}\tau^{k+1}}$. We further need to show that in this case $h_{\mathcal{G}}^{\mu^k\tau^k}\leq h_{\mathcal{G}}^{\mu^{k+1}\tau^{k+1}}$. For each state that belongs to the recurrent class, the corresponding entry of $\sum_{t=0}^{T-1}{P^{\mu^{k+1}\tau^{k+1}}}^t$ is positive. By observing \eqref{eq: gain difference}, we have $\delta(s)=0$ for all $s$ belonging to the recurrent class. Thus according to \eqref{eq: infinity sum}, we have that $h_{\mathcal{G}}^{\mu^k\tau^k}(s)= h_{\mathcal{G}}^{\mu^{k+1}\tau^{k+1}}(s)$ for all $s$ in the recurrent class. 

By observing \eqref{eq: infinity sum}, we have \begin{align*}
&\lim_{T\rightarrow\infty}\sum_{t=0}^{T-1}{P^{\mu^{k+1}\tau^{k+1}}}^t(h_{\mathcal{G}}^{\mu^k\tau^k}- h_{\mathcal{G}}^{\mu^{k+1}\tau^{k+1}})\\
=&\quad h_{\mathcal{G}}^{\mu^k\tau^k}- h_{\mathcal{G}}^{\mu^{k+1}\tau^{k+1}}-\lim_{T\rightarrow\infty}\sum_{t=0}^{T-1}{P^{\mu^{k+1}\tau^{k+1}}}^t\delta\\
\leq&\quad h_{\mathcal{G}}^{\mu^k\tau^k}- h_{\mathcal{G}}^{\mu^{k+1}\tau^{k+1}}-\delta.
\end{align*}
Note that the elements corresponding to the transient states in ${P^{\mu^{k+1}\tau^{k+1}}}^t(h_{\mathcal{G}}^{\mu^k\tau^k}- h_{\mathcal{G}}^{\mu^{k+1}\tau^{k+1}})$ approach zero when $t\rightarrow\infty$. Thus we have $h_{\mathcal{G}}^{\mu^k\tau^k}(s)- h_{\mathcal{G}}^{\mu^{k+1}\tau^{k+1}}(s)\geq \delta(s)\geq 0$ for all transient states $s$. Combining all the above together, we have that $\mu^k=\mu^{k+1}$ when $\delta=0$, otherwise $h_{\mathcal{G}}^{\mu^k\tau^k}(s)- h_{\mathcal{G}}^{\mu^{k+1}\tau^{k+1}}(s)\geq 0$ holds for some transient state $s$. 

When Algorithm \ref{algo:ACPC} terminates, we have that
\begin{equation}\label{eq: terminate T calculation}
T^*(J_{\mathcal{G}}^{\mu^{k+1}\tau^{k+1}},h_{\mathcal{G}}^{\mu^{k+1}\tau^{k+1}})=T^*(J_{\mathcal{G}}^{\mu^k\tau^k},h_{\mathcal{G}}^{\mu^k\tau^k}).
\end{equation}
By using policy iteration algorithm, the gain-bias pair $(J_{\mathcal{G}}^{\mu^k\tau^k},h_{\mathcal{G}}^{\mu^k\tau^k})$ is first evaluated using Lemma \ref{lemma: gain-bias calculation} at each iteration $k$. Then using the gain-bias pair obtained in policy evaluation phase, the $T^*$ operator is calculated as shown in Algorithm \ref{algo:ACPC}. Thus according to Lemma \ref{lemma: gain-bias calculation}, we see
\begin{equation}\label{eq: T operator calculation}
\mu=\argmin_{\mu}\>\max_{\tau}\left\{ g^{\mu\tau}+P^{\mu\tau}h_{\mathcal{G}}^{\mu^k\tau^k}+P^{\mu\tau}_{\text{out}}J_{\mathcal{G}}^{\mu^k\tau^k}\right\}.
\end{equation}
Note that the right hand side of \eqref{eq: T operator calculation} is equivalent to how $T^*$ is calculated in Algorithm \ref{algo:ACPC}. Therefore, by combining \eqref{eq: terminate T calculation} and \eqref{eq: T operator calculation}, we obtain $J_{\mathcal{G}}^{\mu^{k}\tau^{k}}+h_{\mathcal{G}}^{\mu^{k}\tau^{k}}=T^*(J_{\mathcal{G}}^{\mu^k\tau^k},h_{\mathcal{G}}^{\mu^k\tau^k}).$ By Theorem \ref{theorem: optimality condition}, we see that $\mu^{k}$ is the optimal control policy.
\end{proof}




\section{Case Study}\label{sec:simulation}

\begin{figure*}[t!]
\centering
                 \begin{subfigure}{.33\textwidth}
                 \includegraphics[width=\textwidth]{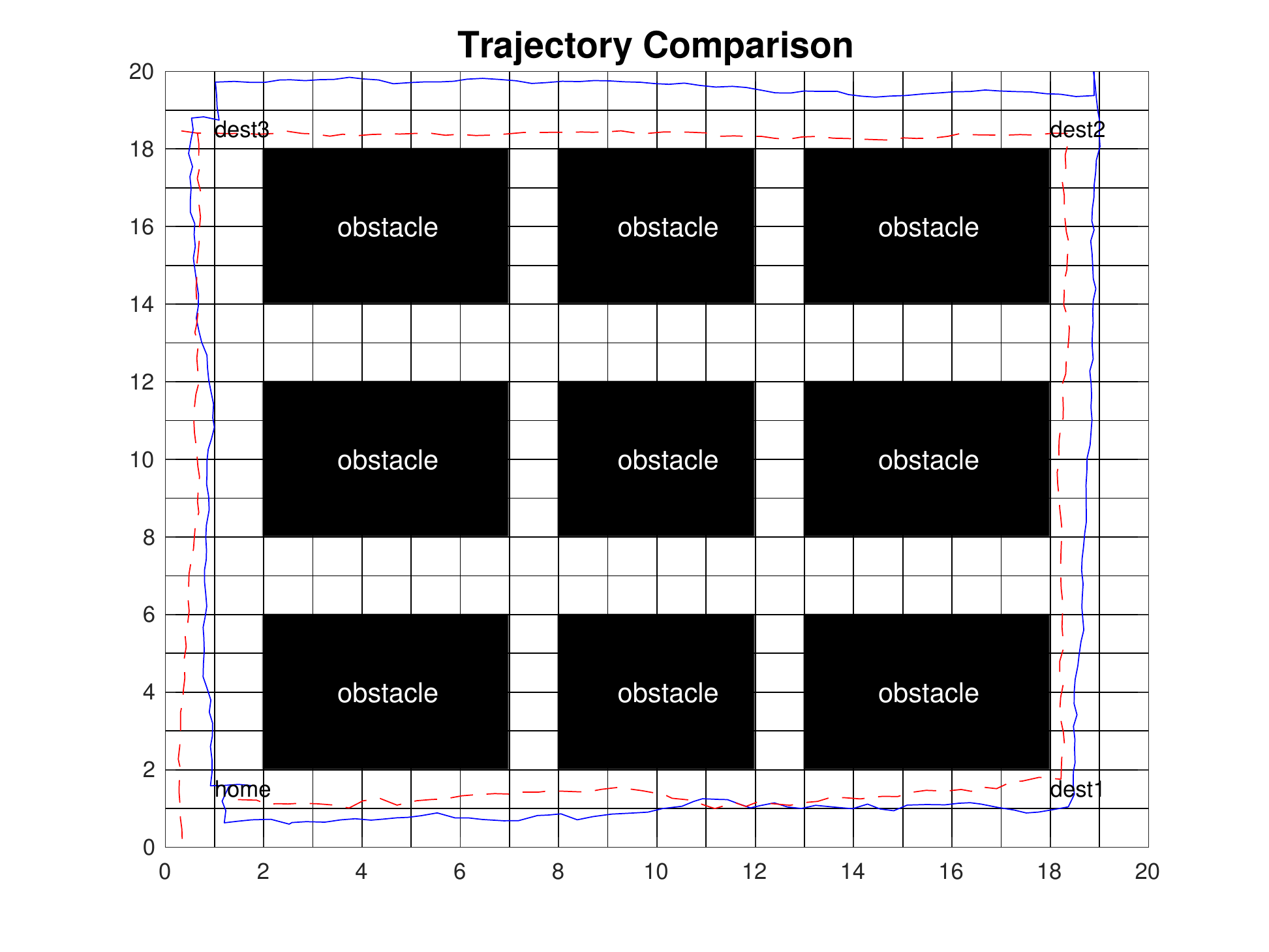}
                 \subcaption {}
                 \label{fig:traj}
                 \end{subfigure}\hfill
                 \begin{subfigure}{.33\textwidth}
                 \includegraphics[width=\textwidth]{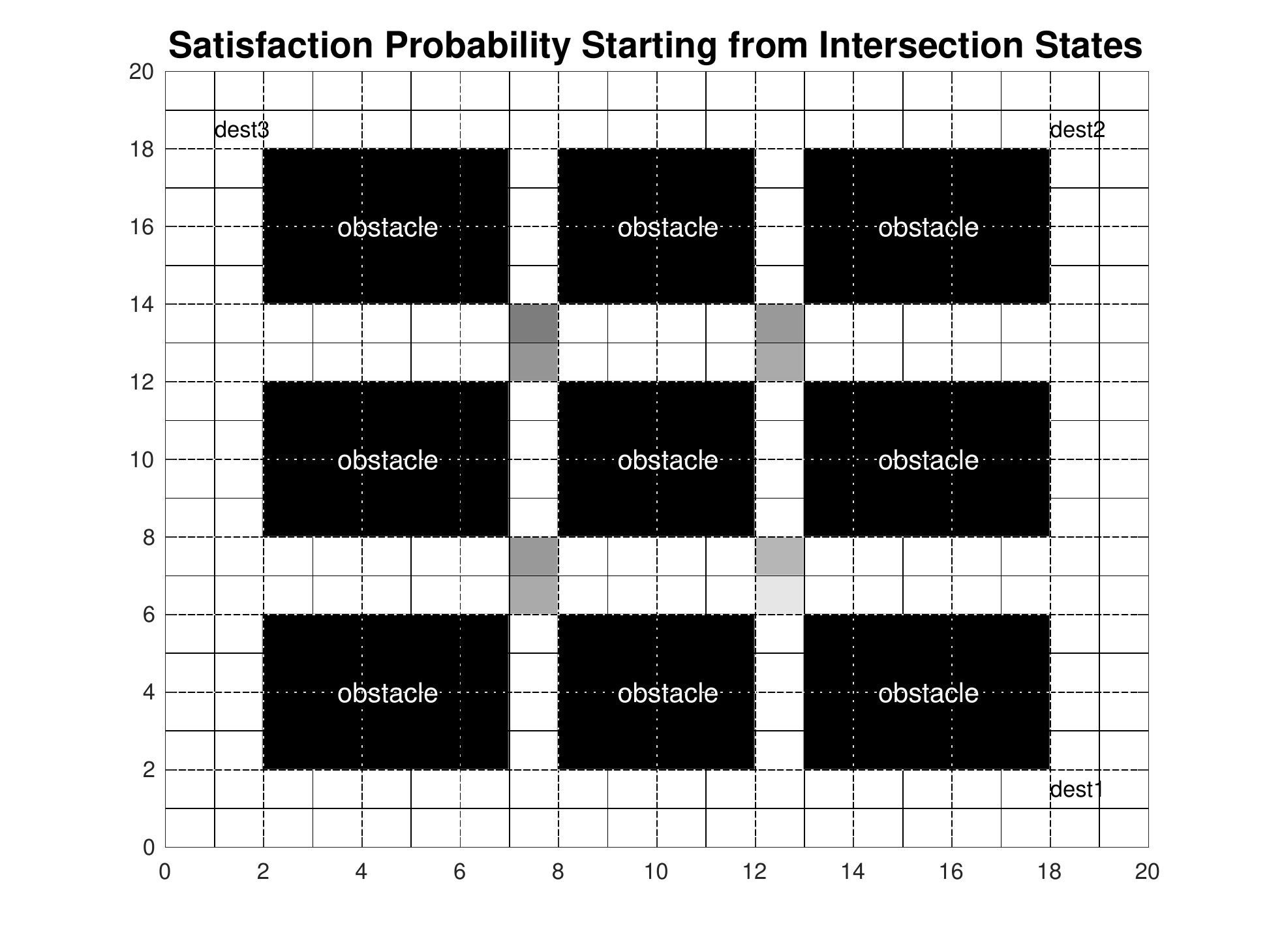}
                 \subcaption {}
                 \label{fig:traj1}
                 \end{subfigure}\hfill
                 \begin{subfigure}{.33\textwidth}
                 \includegraphics[width=\textwidth]{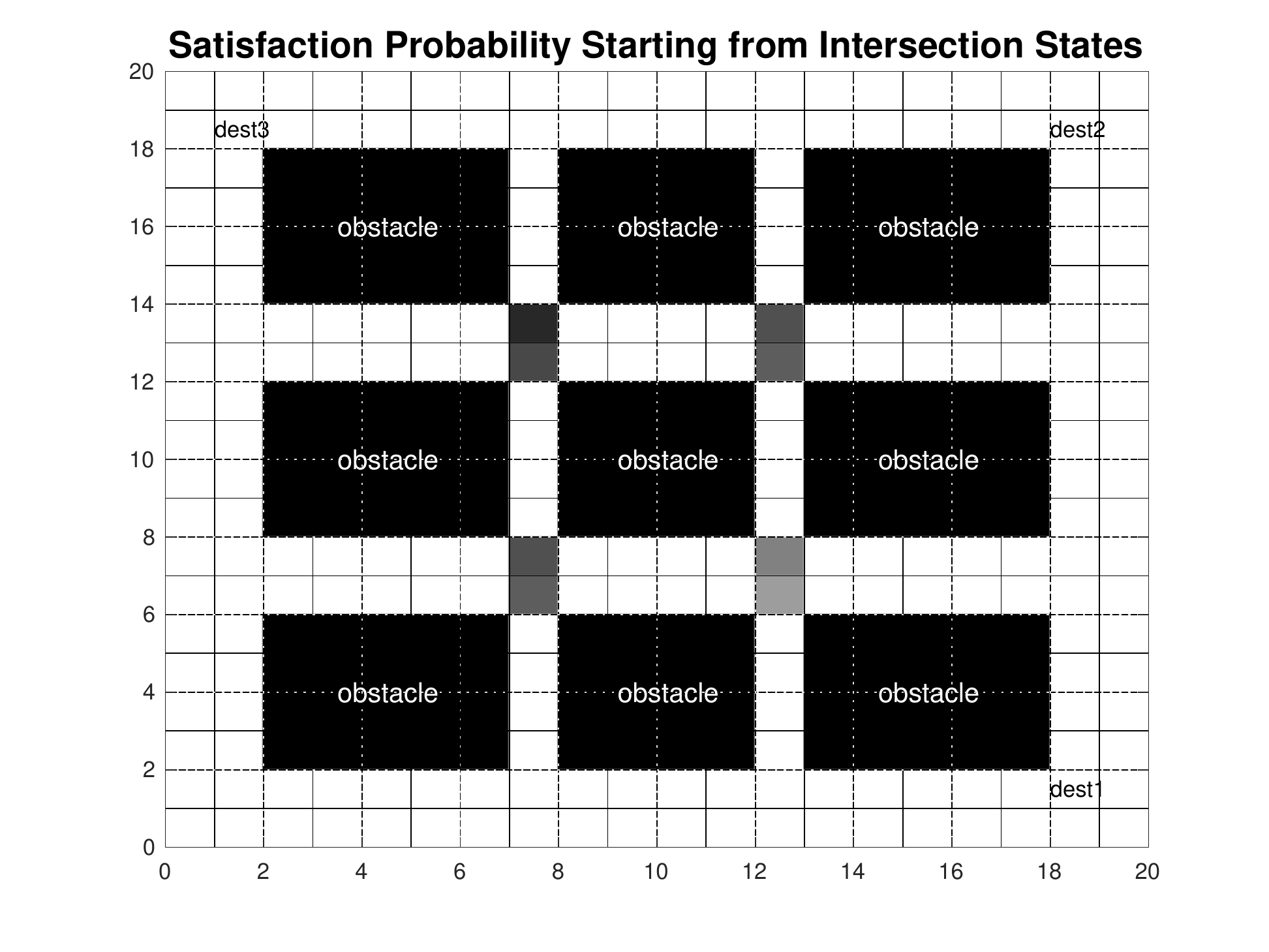}
                 \subcaption{}
                 \label{fig:traj2}
                 \end{subfigure}\hfill
\caption{Comparison of the proposed approach and the approach without considering the presence of the adversary. Fig. \ref{fig:traj} gives the trajectories obtained using two approaches. The solid blue line is the trajectory obtained using the proposed approach, while the dashed red line represents the trajectory obtained using the approach without considering the presence of the adversary. Fig. \ref{fig:traj1} and Fig. \ref{fig:traj2} present the probability of satisfying the LTL specification using the proposed approach and the approach without considering the adversary when the initial state is set as each of the states lands in the intersections of the grid world, respectively. The shade of gray level at the intersection states corresponds to the satisfaction probability, with black being probability $0$ and white being probability $1$.}
\end{figure*}

\begin{figure*}[t!]
\centering
                 \begin{subfigure}{.43\textwidth}
                 \includegraphics[width=\textwidth]{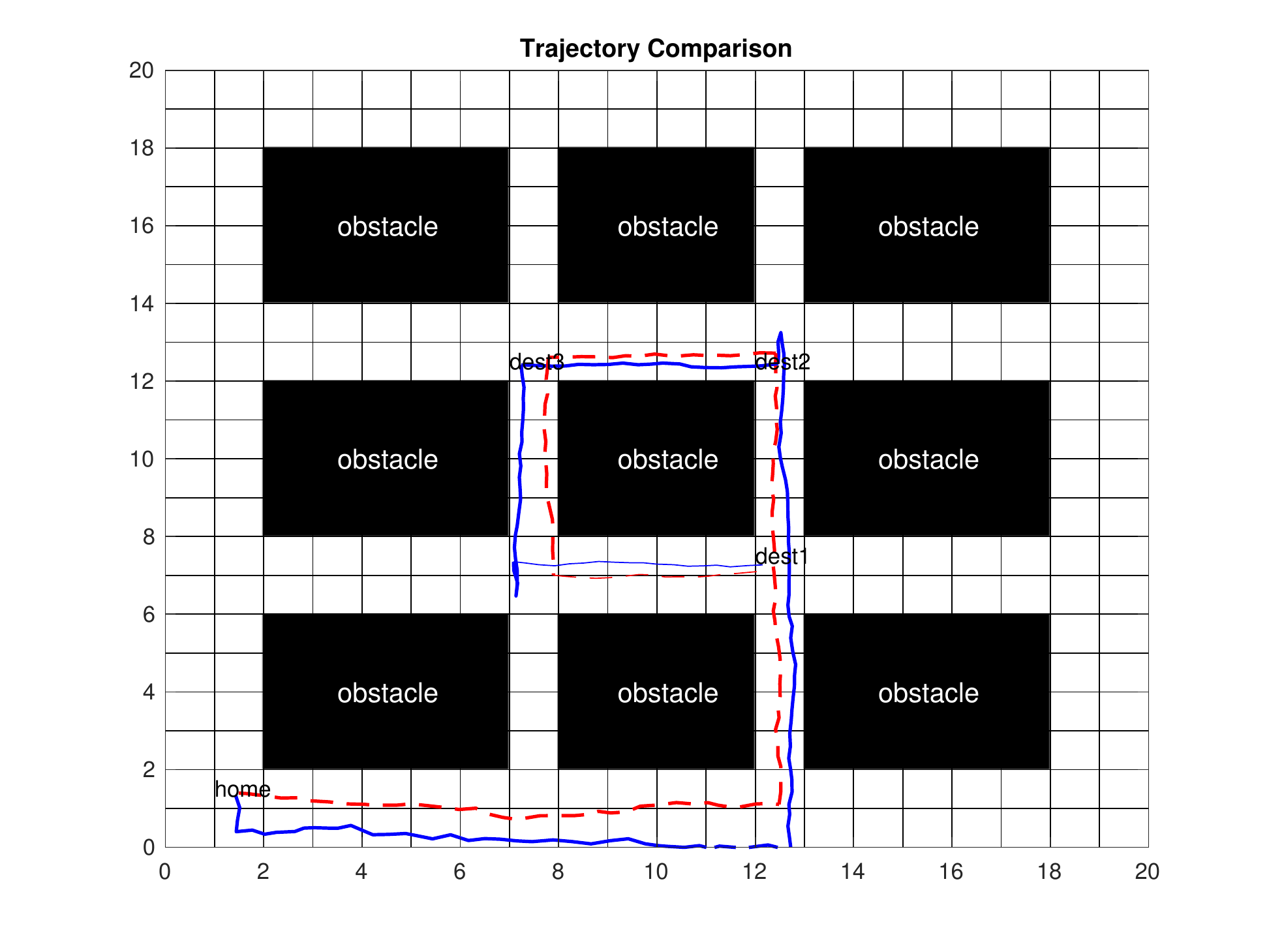}
                 \subcaption {}
                 \label{fig:traj3}
                 \end{subfigure}\hfill
                 \begin{subfigure}{.43\textwidth}
                 \includegraphics[width=\textwidth]{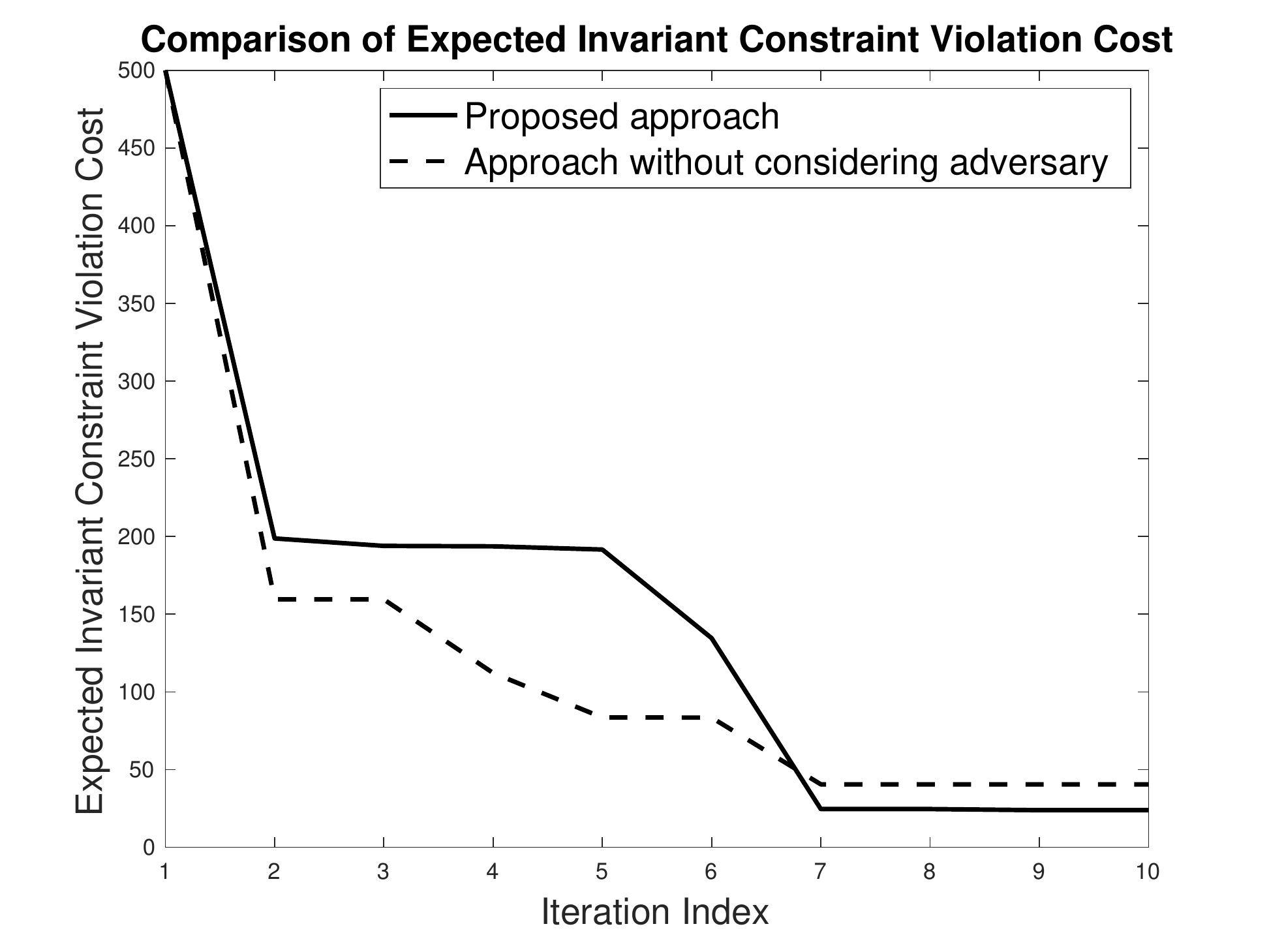}
                 \subcaption {}
                 \label{fig:cost}
                 \end{subfigure}\hfill
\caption{Comparison of the proposed approach and the approach without considering the presence of the adversary. Fig. \ref{fig:traj} gives the trajectories obtained using two approaches. The solid blue line is the trajectory obtained using the proposed approach, while the dashed red line represents the trajectory obtained using the approach without considering the presence of the adversary. Fig. \ref{fig:cost} shows the expected invariant constraint violation cost with respect to iteration indices.}
\end{figure*}

In this section, we present two case studies to demonstrate our proposed method. In particular, we focus on the application of remotely controlled UAV under deception attack. In the first case study, the UAV is given a specification modeling reach-avoid requirement. In the second case study, the UAV is given a specification modeling surveillance and collision free requirement. Both case studies were run on a Macbook Pro with 2.6GHz Intel Core i5 CPU and 8GB RAM.

\subsection{Case Study I: Remotely Controlled UAV under Deception Attack with Reach-Avoid Specification}\label{ex:case 1}

\begin{table*}[t!]
\centering
\begin{tabular}{|c|c|c|c|c|c|c|c|c|}
  \hline
   State & $(7,8)$ & $(8,8)$ & $(13,8)$ & $(14,8)$ & $(7,13)$ & $(8,13)$ & $(13,13)$ &$(14,13)$ \\
   \hline
  $Pr^{\mu\tau}_\mathcal{G}$ & 0.6684 & 0.6028 &0.5915 & 0.4893 & 0.8981 & 0.7126 & 0.6684 & 0.6028 \\
  \hline
  $Pr^{\tilde{\mu}\tilde{\tau}}_\mathcal{G}$ & 0.3619 & 0.3182 & 0.2878 & 0.1701 & 0.6146 & 0.5112 & 0.3619 & 0.3182 \\
  \hline
  Improvement & $84.69\%$ & $89.44\%$ & $105.52\%$ &$187.65\%$ &$46.13\%$ & $39.40\%$ &$84.69\%$ & $89.44\%$\\
  \hline
\end{tabular}
\caption{Comparison of probabilities of satisfying specification $\phi$ when starting from the states located in intersections using proposed approach and approach without considering the adversary.}
\label{table:prob}
\end{table*}

In this case study, we focus on the application of remotely controlled UAV, which conducts package delivery service. The UAV carries multiple packages and is required to deliver the packages to pre-given locations in particular order (e.g., the solution of a travelling salesman problem). The UAV navigates in a discrete bounded grid environment following system model $x(t+1) = x(t) + \left(u_{C}(t) + u_{A}(t)+\vartheta(t)\right)\Delta t,$ where $x(t)\in\mathcal{R}^2$ is the location of the UAV, $u_C(t)\in\mathcal{U}\subseteq\mathcal{R}^2$ is the control signal, $u_A(t)\in\mathcal{A}\subseteq\mathcal{R}^2$ is the attack signal, $\vartheta(t)\in\mathcal{D}\subseteq\mathcal{R}^2$ is the stochastic disturbance and $\Delta t$ is time interval. In particular, we let the control set $\mathcal{U}=[-0.3,0.3]^2$, the attack action signal set $\mathcal{A}=[-0.2,0.2]^2$, the disturbance set $\mathcal{D}=[-0.05,0.05]^2$. Also, the disturbance $\vartheta(t)\sim\mathcal{N}(0,\Gamma)$, where $\Gamma=\text{diag}(0.15^2,0.15^2)$.

We abstract the system as an SG using Algorithm \ref{alg: CMDP}. In particular, given the location of the UAV, we can map the location of the UAV to the grid and simulate the grid it reaches at time $t+1$. Each grid in the environment can be mapped to a state in the SG. In this case study, there exists $400$ states in the SG. In the following, we use location and state interchangeably. The control actions and attack signals are sets of discrete control inputs. The label of each state is shown in Fig. \ref{fig:traj}. The transition probability can be obtained using Algorithm \ref{alg: CMDP}. 

The UAV is required to deliver packages to three locations `dest1', `dest2', and `dest3' in this particular order after departing from its `home'. Then it has to return to `home' and stay there forever. Also during this delivery service, the UAV should avoid colliding with obstacle areas marked as black in Fig. \ref{fig:traj} to Fig. \ref{fig:traj2}. The LTL formula is written as $\phi=\text{home}\land\Diamond(\text{dest}1\land\Diamond(\text{dest}2\land\Diamond\text{dest}3))\land\Diamond\Box\text{home}\land\Box\neg\text{obstacle}.$ 

We compare the control policy obtained using the proposed approach with that synthesized without considering the presence of the adversary. In Fig. \ref{fig:traj}, we present the sample trajectories obtained using these approaches. The solid line shows a sample trajectory obtained by using the proposed approach, and the dashed line shows the trajectory obtained by using the control policy synthesized without considering the presence of adversary. To demonstrate the resilience of the proposed approach, we let the states located in the intersections be labelled as `home' and hence are set as the initial states. We compare the probability of satisfying the specification $\phi$ using the proposed approach and the approach without considering the adversary in Fig. \ref{fig:traj1} and Fig. \ref{fig:traj2}, respectively. We observe that the control policy synthesized using the proposed approach has higher probability of satisfying specification $\phi$. The detailed probability of satisfying specification $\phi$ is listed in Table \ref{table:prob}. Denote the probability of satisfying specification $\phi$ using the proposed approach and the approach without considering the adversary as $Pr^{\mu\tau}_\mathcal{G}$ and $Pr^{\tilde{\mu}\tilde{\tau}}_\mathcal{G}$, respectively. By using the proposed approach, the average of the improvements of the probability of satisfying the given specification starting from intersection states achieves $({Pr^{\mu\tau}_\mathcal{G}-Pr^{\tilde{\mu}\tilde{\tau}}_\mathcal{G}})/{Pr^{\tilde{\mu}\tilde{\tau}}_\mathcal{G}} = 90.87\%.$

The computation of transition probability took $890$ seconds. Given the transition probability, the SG and DRA associated with specification $\phi$ are created within $1$ and $0.01$ second, respectively. The computation of product SG took $80$ seconds. The product SG has $2000$ states and $41700$ transitions. It took $45$ seconds to compute the control policy.

\subsection{Case Study II: Remotely Controlled UAV under Deception Attack with Liveness and Invariant Specification}

In this case study, we focus on the same UAV model as presented in Section \ref{ex:case 1}. Let the UAV be given an LTL specification $\phi=\Box(\Diamond(\text{dest}1\land\Diamond(\text{dest}2\land\Diamond\text{dest}3)))\land\Box\neg\text{obstacle}$ consisting of liveness and invariant constraints. In particular, the lieveness constraint $\phi_1=\Box(\Diamond(\text{dest}1\land\Diamond(\text{dest}2\land\Diamond\text{dest}3)))$ models a surveillance task, i.e., the UAV is required to patrol three critical regions infinitely often following a particular order, and the invariant constraint $\psi=\Box\neg\text{obstacle}$ requires the UAV to avoid collisions with obstacles. Once the critical regions are visited, a cycle is completed. During each cycle, the rate of invariant constraint violation need to be minimized. The cost incurred at each violation is assigned to be $20$. 

We compare the proposed approach with the approach without considering the adversary. The sample trajectories obtained using these approaches are presented in Fig. \ref{fig:traj3}. In particular, the solid line shows a sample trajectory obtained by using the proposed approach, and the dashed line shows the trajectory obtained by using the control policy synthesized without considering the presence of adversary. We observe that the control strategy of synthesized using the approach without considering the adversary uses less effort comparing to the proposed approach. However, the proposed approach is more resilient since it uses more control effort to deviate from the obstacles to minimize the violation cost. We present the average invariant constraint violation cost incurred using the control policy obtained at each iteration in Fig. \ref{fig:cost}. We observe that the proposed approach incurs lower cost after convergence. In Fig. \ref{fig:cost}, the approach that does not consider the adversary incurs lower cost compared to the proposed approach during iterations $2$ to $6$. The reason is that although the proposed approach guarantees convergence to Stackelberg equilibrium, it does not guarantee optimality of the intermediate policies. The average invariant constraint violation cost using proposed approach is $23.90$, while the average invariant constraint violation cost using the approach without considering the adversary is $40.52$. The improvement achieved using the proposed approach is $28.67\%$. 

Given the transition probability, the SG and DRA associated with specification $\phi$ are created within $1$ and $0.01$ second, respectively. The computation of product SG took $72$ seconds. The product SG has $1600$ states and $26688$ transitions. It took $36$ seconds to compute the control policy.
\section{Conclusion}\label{sec: conclusion}

In this paper, we investigated two problems on a discrete-time dynamical system in the presence of an adversary. We assumed that the adversary can initiate malicious attacks on the system by observing the control policy of the controller and choosing an intelligent strategy. First, we studied the problem of maximizing the probability of satisfying given LTL specifications. A stochastic Stackelberg game was formulated to compute a stationary control policy. A deterministic polynomial-time algorithm was proposed to solve the game. Second, we formulated the problem of minimizing the expected times of invariant constraint violation while maximizing the probability of satisfying the liveness specification. We developed a policy iteration algorithm to compute an optimal control policy by exploiting connections to the ACPS problem.  The bottleneck of the proposed framework is the computation complexity of the abstraction procedure. However, this is beyond the scope of this work. The potential ways to reduce the computation complexity include exploring the symmetry of the environment, and applying receding horizon based control framework. In future work, we will consider non-stationary control and adversary policies.


\bibliographystyle{IEEEtran}
\bibliography{TAC2017}

\vskip -1.8\baselineskip plus -1fil
\begin{IEEEbiography}[{\includegraphics[width=1in,height=1.25in,clip,keepaspectratio]{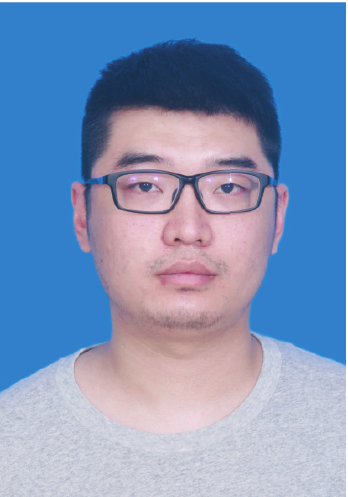}}]
{Luyao Niu}(SM'15)
received the B.Eng. degree from the School of Electro-Mechanical Engineering, Xidian University, Xi’an, China, in 2013 and the M.Sc. degree from the Department of Electrical and Computer Engineering, Worcester Polytechnic Institute (WPI) in 2015. He has been working towards his Ph.D. degree in the Department of Electrical and Computer Engineering at Worcester Polytechnic Institute since 2016. His current research interests include optimization, game theory, and control and security of cyber physical systems. 
\end{IEEEbiography}
\vskip -1.8\baselineskip plus -1fil

\begin{IEEEbiography}[{\includegraphics[width=1in,height=1.25in,clip,keepaspectratio]{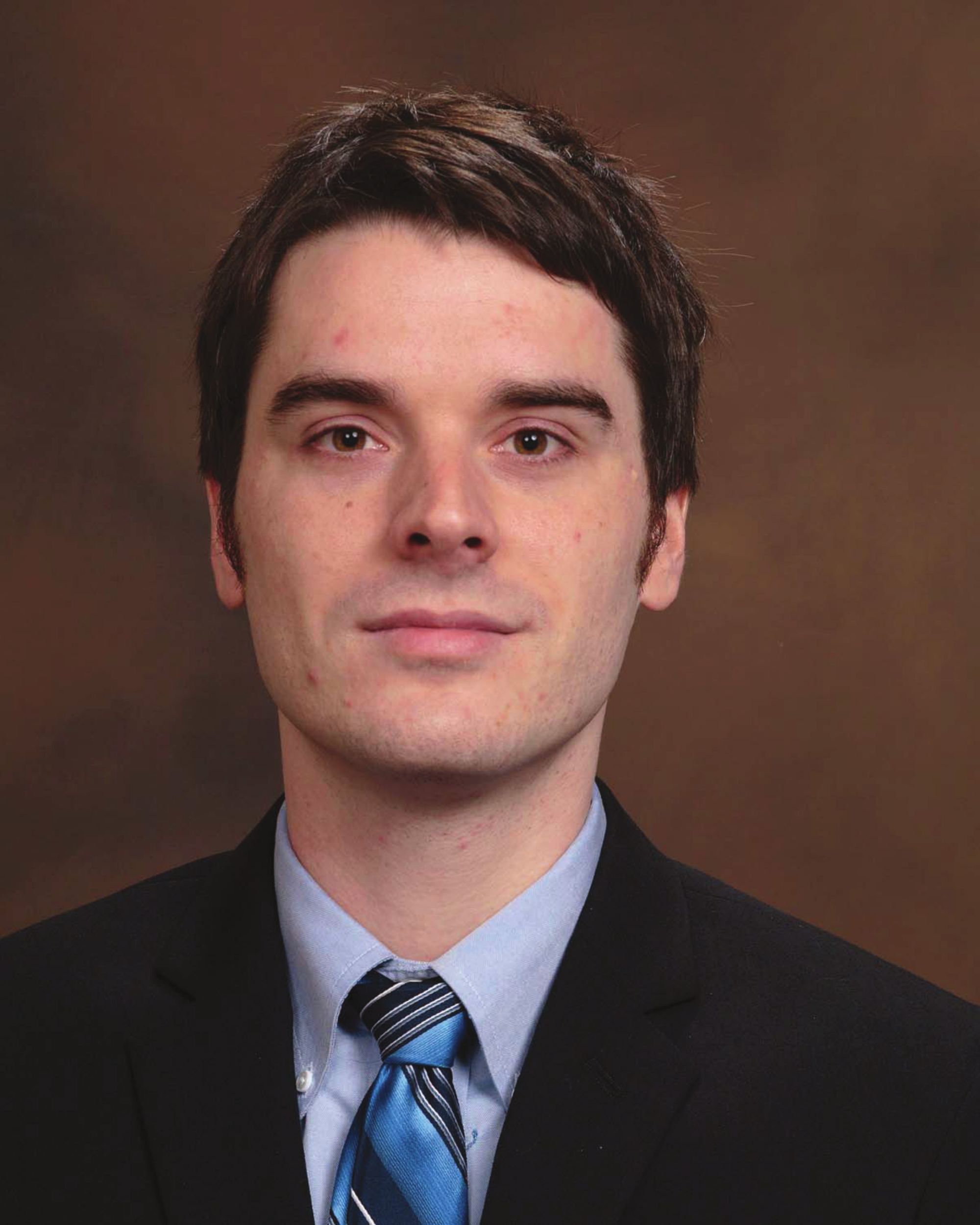}}]{Andrew Clark}(M'15)
is an Assistant Professor in the Department of Electrical and Computer Engineering at Worcester Polytechnic Institute. He received the B.S. degree in Electrical Engineering and the M.S. degree in Mathematics from the University of Michigan - Ann Arbor in 2007 and 2008, respectively. He received the Ph.D. degree from the Network Security Lab (NSL), Department of Electrical Engineering, at the University of Washington - Seattle in 2014. He is author or co-author of the IEEE/IFIP William C. Carter award-winning paper (2010), the WiOpt Best Paper (2012), and the WiOpt Student Best Paper (2014), and was a finalist for the IEEE CDC 2012 Best Student-Paper Award. He received the University of Washington Center for Information Assurance and Cybersecurity (CIAC) Distinguished Research Award (2012) and Distinguished Dissertation Award (2014). His research interests include control and security of complex networks, submodular optimization, and control-theoretic modeling of network security threats. 
\end{IEEEbiography}

\end{document}